%% file: main-ieee.tex
\documentclass[journal,10pt]{IEEEtran}

\usepackage[numbers,sort&compress]{natbib}

\usepackage{booktabs} 
\usepackage[linesnumbered,commentsnumbered,ruled,vlined]{algorithm2e}

\SetCommentSty{mycommfont}

\SetKwRepeat{Do}{do}{while}

\usepackage{ulem}
\usepackage{multicol}
\usepackage{latexsym}
\usepackage{multirow}
\usepackage{graphicx}
\usepackage{threeparttable}
\usepackage{amsmath}
\usepackage{amsfonts}
\usepackage{pifont}
\usepackage{verbatim}
\usepackage{url}
\usepackage{hyperref}
\hypersetup{
	colorlinks=true,
	linkcolor=blue,
	citecolor=violet,
	filecolor=magenta,      
	urlcolor=blue,
}
\usepackage{tcolorbox}
\usepackage{galois}
\usepackage{caption,subcaption}
\usepackage{tikz}
\usepackage{pgfplots}
\pgfplotsset{compat=1.10} 

\usepackage[utf8]{inputenc}
\usepackage{cleveref}
\crefname{section}{§}{§§}
\Crefname{section}{§}{§§}

\newcommand{\mytodoblue}[1]{\textcolor{blue}{\ding{46}~{\sf}~#1}}
\newcommand{\mytodored}[1]{\textcolor{red}{\ding{46}~{\sf}~#1}}

\newcommand{\ToolName}{\text{Spear}}
\newif\ifshowcomments
\showcommentsfalse

\ifshowcomments
\newcommand{\yao}[1]{\mytodored{[yao: #1]}}
\newcommand{\zuo}[1]{\mytodored{[zuo: #1]}}
\newcommand{\wang}[1]{\mytodored{[wang: #1]}}
\newcommand{\li}[1]{\mytodoblue{[li: #1]}}
\newcommand{\revise}[1]{\mytodobluetwo{#1}}
\else
\newcommand{\yao}[1]{}
\newcommand{\zuo}[1]{}
\newcommand{\wang}[1]{}
\newcommand{\li}[1]{}
\newcommand{\revise}[1]{#1}
\fi

\usepackage{amsthm} 
\newtheorem{definition}{Definition} 
\newtheorem{theorem}{Theorem} 
\theoremstyle{definition} 
\newtheorem{example}{Example}[section] 

\usepackage{balance}

\usepackage{fontawesome5}

\begin{document}

\title{Synthesizing Best Abstract Transformers via Parallel Bit-Vector Optimization}


\author{%
Weiqi Wang$^{1}$, Peisen Yao$^{1}$, Hanrui Zuo$^{1}$, Yuan Li$^{1}$, Hongfei Fu$^{2}$, Kui Ren$^{1}$ \\
$^{1}$The State Key Laboratory of Blockchain and Data Security, Zhejiang University, China \\
$^{2}$ Shanghai Jiao Tong University, China \\
Email: \{wqwang1009, pyaoaa, zarin, ly.liyuan, kuiren\}@zju.edu.cn, jt002845@sjtu.edu.cn
}

\maketitle
\IEEEdisplaynontitleabstractindextext

\begin{abstract}
Abstract interpretation provides a principled foundation for constructing sound static analyses through systematic abstraction. A central challenge is synthesizing the best abstract transformers that achieve optimal precision within a given abstract domain. This paper addresses this problem for low-level code modeled with fixed-size bit-vectors. Recent approaches formulate the synthesis task as a multi-objective Optimization Modulo Theories (OMT) problem, but suffer from limited scalability. 
We introduce Spear, a parallel synthesis framework that exploits a key structural insight: while the bits within each objective must be processed sequentially, the objectives themselves are independent. Spear leverages the independence of inter-objective bits to better parallelize the synthesis. Experimental results on benchmarks across two binary analysis domains show that Spear consistently outperforms state-of-the-art OMT solvers, solving more instances and achieving significantly improved runtimes. To our knowledge, this is the first approach to apply parallelism to accelerate the synthesis of optimal abstract transformers.
\end{abstract}

\begin{IEEEkeywords}
Abstract interpretation, \revise{abstract transformer synthesis, SMT, OMT,} parallel constraint solving
\end{IEEEkeywords}


\input{1intro}

\input{2background}
\input{3overview}

\input{4divide}

\input{5evaluation}

\input{6discuss}

\input{7related}
\input{8conclusion}
\input{9dataavail}

\balance
\footnotesize{
\bibliographystyle{unsrtnat}
\bibliography{opt,tmp,psmt}
}

\end{document}

%% file: 1intro.tex
\section{Introduction}
\label{sec:intro}

Abstract interpretation~\cite{cousot1976static,cousot1978automatic} provides a mathematical framework for constructing sound static program analyses. 
The framework is founded on abstract domains that encode over-approximations of concrete program states, with abstract transformers providing a systematic means of approximating program semantics.
\revise{Building an effective analyzer requires both an appropriate abstract domain and efficient transfer of program semantics into that domain.} A fundamental theoretical consideration is the selection of abstract domains that capture the properties of interest, such as numerical invariants~\cite{singh2018fast} or memory shapes~\cite{reps2004symbolic}.

A central theoretical result due to \citet{DBLP:conf/popl/CousotC79} establishes that given a Galois connection between concrete domain $C$ and abstract domain $A$, and concrete transformer $f: C \to C$, there exists an optimal abstract transformer $f^\#: A \to A$ that represents the theoretical limit of precision attainable within the chosen abstraction. As summarized in Table~\ref{tbl:existing_work}, significant research has focused on automatically synthesizing optimal abstract transformers across various abstract domains, including finite-height domains~\cite{reps2004symbolic} and template linear domains~\cite{monniaux2009automatic,brauer2010automatic}. These advances have enabled numerous applications, such as program verification~\cite{reps2004symbolic,li2014symbolic,jiang2017block} \revise{
control-flow recovery~\cite{barrett2010range,Reinbacher:2011}, and binary memory analysis~\cite{VSA:Reps08}}.
\revise{For software-engineering tasks, the practical issue is not merely whether the best transformer exists, but whether it can be synthesized within the time budget of an analysis pipeline that may issue thousands of solver queries.}
\begin{table*}[t]
\centering
\caption{Representative work on best abstract transformer synthesis}
\label{tbl:existing_work}
{
\begin{tabular}{ c  l  l  l    l  }
\toprule
\textbf{Input}    & \textbf{Theory} & \textbf{Abstract Domain}  & \textbf{Applications}    &      \\ \midrule
Source & Real Arithmetic & Octagon &      Program Verification~\cite{li2014symbolic}    \\       
Source & Real  Arithmetic  & Polyhedron &     Program Verification~\cite{jiang2017block}        \\ 
Source & Real Arithmetic & Template polyhedra & Program Verification~\cite{monniaux2009automatic} \\
Source &  Bitvec  Arithmetic  & Interval, bitmask  &       Compiler Optimization~\cite{ritter2015compiler}                \\   
Source &    FOL   & Shape graphs &           Shape analysis~\cite{reps2004symbolic,DBLP:conf/tacas/YorshRS04}       & \\          
Binary &   Bitvec  Arithmetic  & Affine relation         & Binary Analysis ~\cite{elder2014abstract}       \\  
Binary &   Bitvec  Arithmetic  & Range, set       &   Binary Analysis ~\cite{brauer2010automatic,brauer2011transfer}       \\  
Binary &  Bitvec  Arithmetic  & Polyhedron    &  Bug detection~\cite{yao2021program,huang2020pangolin}          \\ \bottomrule
\end{tabular}
}	
\end{table*}

This work addresses the synthesis of abstract transformers for low-level programs, such as X86 assembly, in which machine integers are encoded as fixed-size bit-vectors~\cite {barrett2010range,poeplau2020symbolic}.
\revise{This setting is central to binary analysis, where analyzers must recover indirect control flow, reason about registers and memory, and support vulnerability-discovery workflows over stripped executables~\cite{shoshitaishvili2016sok,Cha:2012:UMB:2310656.2310692,ye2024manta,zhou2024plankton}.}
The significance of abstract transformer synthesis in this domain is twofold: it mitigates precision loss in analyzing sequences of instructions through holistic transformer synthesis, as the optimality of abstract transformers is not closed under composition~\cite{cousot1978automatic};  it provides a rigorous and automated treatment of machine arithmetic semantics such as wrap-around effects, which would otherwise be tedious and error-prone to model manually~\cite{tsl,thakur2012bilateral}.

To address the problem of synthesizing the best abstract transformers, recent approaches have leveraged Optimization Modulo Theories (OMT)~\cite{tsiskaridze2024generalized,li2014symbolic}, a framework that extends Satisfiability Modulo Theories (SMT) with optimization capabilities. Specifically, the problem of abstract transformer synthesis is formulated as solving \emph{boxed, multi-objective OMT} instances of the form:
$$\text{max } \{ g_1, \ldots, g_n \} \text{ s.t.}  \varphi,$$ where $g_1, \ldots, g_n$ \revise{are separate optimization targets} and $\varphi$ represents the constraining SMT formula.

In our formulation, $\varphi$ is expressed in quantifier-free bit-vector logic, with each $g_i$ representing a bit-vector term. 
A key consideration in leveraging OMT solvers for optimal abstract transformer synthesis is their time efficiency. 
\revise{
Existing approaches to OMT(BV) include reductions to alternative reasoning frameworks, such as weighted MaxSAT~\cite{bjorner2015nuz,nadel2016bit} and quantifier instantiation~\cite{brauer2011transfer,kong2018delta-decision}, and iterative SMT-based methods that use binary or linear search~\cite{li2014symbolic,yao2021program,DBLP:journals/corr/abs-1905-02838}.
Modern boxed OMT solvers additionally interleave Boolean search, theory minimization, and improvement clauses, thereby improving several objectives during the same global search process~\cite{sebastiani2015pushing,DBLP:journals/corr/SebastianiT17}. While these approaches have shown promise, scaling them to efficiently handle large and complex constraints remains a significant challenge.}

To accelerate OMT solving, we introduce \ToolName, which enhances time efficiency through parallel computation. \ToolName\ leverages the inherent independence of bits across distinct objectives, enabling fine-grained parallelism by concurrently grouping and processing individual bits.
We evaluate \ToolName\ using benchmarks derived from two binary analysis clients, comparing it against two state-of-the-art OMT solvers: $\nu$Z~\cite{bjorner2015nuz} and OptiMathSAT~\cite{sebastiani2020optimathsat}. The results demonstrate that \ToolName\ consistently outperforms its competitors, solving significantly more instances and achieving superior runtime performance. For example, with eight cores, \ToolName\ solves 45\%–52\% more instances and delivers average speedups of 7.4$\times$ and 5.4$\times$ relative to $\nu$Z and OptiMathSAT, respectively.
To our knowledge, this is the first work to exploit parallelism to accelerate the synthesis of optimal abstract transformers. To summarize, this paper  makes the following contributions:
\begin{itemize}
    \item We introduce \ToolName, the first parallel OMT(BV) framework for synthesizing the best abstract transformers for low-level programs, exploiting \revise{boxed-objective structure, and bit-level scheduling} to enable fine-grained parallelism.
   \item We evaluate \ToolName\ on benchmarks from two binary analysis applications, \revise{namely, hybrid fuzzing and static vulnerability detection,} showing that it solves up to 52\% more instances and achieves speedups of up to 7.4$\times$ over state-of-the-art OMT solvers.
\end{itemize}

The rest of this paper is organized as follows. \Cref{sec:background} reviews the best abstract transformer synthesis and boxed OMT for bit-vectors. \Cref{sec:overview} outlines the single-objective algorithm and states the problem. \Cref{sec:divide-and-conquer} presents our parallel approach. \revise{\Cref{sec:applications} describes the implementation and binary-analysis applications.} \Cref{sec:eval} reports the evaluation; \Cref{sec:discuss} and \Cref{sec:related} discuss implications and related work. \Cref{sec:conclu} concludes.

%% file: 2background.tex
\section{Background and Motivation}
\label{sec:background}

This section provides the basic background for the problem of best abstract transformer synthesis and optimization modulo theory for bit-vector arithmetic.


\subsection{Best Abstract Transformer Synthesis}
Abstract interpretation is a foundational framework for static program analysis, typically formalized via Galois connections between concrete and abstract domains. 
Let $(C, \leq_C)$ and $(A, \leq_A)$ be complete lattices representing the concrete and abstract domains, respectively. The functions $\alpha: C \to A$ and $\gamma: A \to C$ denote abstraction and concretization. The pair $(\alpha, \gamma)$ forms a Galois connection if $\alpha(c) \leq_A a \iff c \leq_C \gamma(a)$ for all $c \in C$ and $a \in A$. 
Given a \textit{concrete transformer} $f : C \to C$, we say an \textit{abstract transformer} $f^\# : A \to A$ is  a \textit{sound abstraction} of $f$ if $\alpha(f(c)) \leq_A f^\#(\alpha(c))$ for any $c \in C$.

\begin{definition} [Best Abstract Transformer]
Given a concrete transformer  $f : C \to C$, the \emph{best (most precise) abstract transformer}  $f^\alpha : A \to A$ that over-approximates  $f$ is
$f^\alpha = \alpha \circ f \circ \gamma : A \to A$, because for any sound abstraction  $f^\#$ it holds that $f^\alpha(a) \leq_A f^\#(a)$ for any abstract value $a \in A$.
\end{definition}

This definition guarantees that $f^\alpha$ is the most precise transformer that can be derived in the abstract domain, thereby establishing a theoretical upper bound on precision.
\footnote{There may be no best abstract transformers~\cite{cousot1978automatic,cousot1995formal,cousot2011logical}. Then, only half of the Galois connection can be used, e.g., the one with concretization function only~\cite{cousot1978automatic}. In this work, we do not account for such abstract domains.}
However, the equation is \textit{non-constructive}, meaning it does not necessarily yield an algorithm for applying $f^\alpha$ or finding a representation of  $f^\alpha$.
Besides, the optimality of abstract transformers is \textit{not closed under composition}, implying that the composition of the best abstractions of two functions $f$ and $g$ may not yield the best abstraction of their composition $f \circ g$.

\smallskip
\noindent \textbf{Synthesizing Best Abstract Transformers}.
To address these challenges, automated synthesis methods have been developed within the symbolic abstraction framework~\cite{reps2004symbolic}.
The core idea is to represent the concrete semantics of a program using a logic formula $\varphi$ in logic $\mathcal{L}$ and then over-approximate the formula with elements in an abstract domain $A$.

\begin{definition} [Symbolic Abstraction]
   \revise{Let $\left[\!\left[ \varphi \right]\!\right]$ denote the set of concrete states that satisfy a formula $\varphi \in \mathcal{L}$. Given $\varphi$ and an abstract domain $A$, the symbolic abstraction problem is to compute the least abstract element $a \in A$ such that $\left[\!\left[ \varphi \right]\!\right] \subseteq \gamma(a)$. Equivalently, symbolic abstraction computes the strongest logical consequence of $\varphi$ expressible in $A$.}
\end{definition}

Depending on the context, the formula $\varphi$ may encode the concrete semantics of different program constructs,  such as a single instruction, a basic block, or a loop-free program fragment.
The problem has been studied for different abstract domains, such as finite-height domains~\cite{reps2004symbolic},
template linear domains~\cite{monniaux2009automatic,brauer2010automatic}, 
and the polyhedral domains~\cite{thakur2012method,yao2021program}.
 
\subsection{Abstract Transformer Synthesis via Bit-Vector Optimization}
Our work is motivated by the need for effective analysis of low-level programs. 
Specifically, we utilize bit-precise static analysis, which models machine integers as fixed-size bit-vectors~\cite {DBLP:conf/issta/JiaH00MZ23} rather than unbounded mathematical integers or reals. In this context, symbolic abstraction aids in two critical aspects of constructing abstract transformers:
\begin{itemize}
    \item First, it mitigates precision loss arising from the fact that optimal abstract transformers are not closed under composition~\cite{cousot1978automatic}.
    Consider a sequence of $n$ concrete operations $g_1, \ldots, g_n$, each approximated by an abstract transformer $f_1, \ldots, f_n$. Applying these transformers in sequence to an abstract input $d$ yields the result $f_n(\ldots f_2(f_1(d)))$. However, greater precision can be achieved by synthesizing a single abstract transformer that directly approximates the overall effect of the composition $g_n \circ \cdots \circ g_1$. 
    \item  Second, it allows the automatic synthesis of abstract transformers for various low-level operations,
    which would otherwise be tedious and error-prone~\cite{tsl,sharma2017sound}. 
     For instance, consider an operation that increments a value by one. In machine arithmetic, if the increment is applied to the largest representable integer, the result wraps around to the smallest representable integer. The abstract transformer must accurately capture this behavior~\cite{tsl,sharma2017sound}.
\end{itemize}

\begin{example}
Consider the following x86 assembler code for a switch table:
\begin{verbatim}
   mov eax, [ebp-0x8] ; eax := *(ebp - 8)
   sub eax, 0x2       ; eax := eax - 2
   cmp eax, 0x5       ; CF := (0 <= eax < 5)
                      ; ZF := (eax == 5)
   ja 0xd8             
   jmp [0x8048a0c + eax*4]
\end{verbatim}

To construct the control flow graph (CFG), we need to analyze the value of \texttt{eax} (e.g., $eax \in [0, 5]$) when the indirect jump is executed. This requires careful reasoning about the carry (CF) and zero (ZF) flags, which are set by the \texttt{cmp} instruction. Symbolic abstraction facilitates precise reasoning about various instructions and their impact on control flow.
\end{example}

This \revise{\sout{works} work} targets \emph{template linear constraint domains}~\cite{colon2003linear}, where an element $F$ is a conjunction of linear inequalities $L_1(s_1, \dots, s_m) \leq p_1 \land \dots \land L_n(s_1, \dots, s_m) \leq p_n$.
Here, the left-hand sides $L_1, \dots, L_n$ are fixed linear forms over variables $s_1, \dots, s_m$, and the right-hand sides $p_1, \dots, p_n$ are parameters.
Particular examples of abstract domains in this class are:
\begin{itemize}
\item Interval domain: For each variable $s$, consider the linear forms $s$ and $-s$.
\item Zone domain: For each pair of variables $s_1$ and $s_2$, consider the linear form $s_1 - s_2$.
\item Octagon domain: For each pair of variables $s_1$ and $s_2$, consider the linear forms $\pm s_1 \pm s_2$.
\end{itemize}

For such domains, symbolic abstraction reduces to solving \emph{boxed} optimization modulo theories (OMT) problems~\cite{sebastiani2015pushing}.

\begin{definition} [Boxed OMT Solving]
  Given an SMT formula $\varphi$ and a set of objectives ${ g_1, \ldots, g_n }$, the goal of the \emph{multiple-independent-objective OMT problem}, also known as \emph{boxed OMT}~\cite{sebastiani2015pushing}, is to find a set of models ${ M_1, \ldots, M_n }$ of $\varphi$ such that each $M_i$ maximizes the respective objective $g_i$. \revise{The objectives are ``independent'' \emph{problems}: each target is optimized separately and may be witnessed by a different model of the same formula. They are not independent variables because, obviously, all objectives remain constrained by the shared formula $\varphi$.}
  \footnote{It is important to distinguish boxed OMT from conventional multi-objective optimization techniques, such as Pareto optimization (where objectives interfere with each other) and lexicographic optimization (where objectives are prioritized in strict order).} 
\end{definition}

To illustrate, consider the interval domain abstraction for variables \( x \) and \( y \), represented by the constraints \( \{ m_1 \leq x, x \leq M_1, m_2 \leq y, y \leq M_2 \} \), where \( m_1, M_1, m_2, M_2 \) are unknown parameters to be synthesized. These parameters can be computed by solving one boxed OMT problem:  
\[
\text{max } \{ x, y, -x, -y \} \ \text{s.t.} \ \varphi.
\]  

\revise{
The following theorem states the precision property used by this reduction.}

\revise{
\begin{theorem}[Precision of interval symbolic abstraction]
\label{thm:interval-precision}
Let $\varphi$ be a satisfiable formula over variables $x_1,\ldots,x_m$, and let $u_j=\max x_j$ and $\ell_j=-\max(-x_j)$ be the optima obtained by solving the boxed OMT instance with objectives $\{x_1,-x_1,\ldots,x_m,-x_m\}$ subject to $\varphi$. Then the interval element $\bigwedge_j \ell_j \leq x_j \leq u_j$ is the least interval abstraction that contains $\left[\!\left[ \varphi \right]\!\right]$.
\end{theorem}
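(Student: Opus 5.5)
The plan has two parts: first show that the interval element $I := \bigwedge_j \ell_j \leq x_j \leq u_j$ is sound, i.e.\ $\left[\!\left[ \varphi \right]\!\right] \subseteq \gamma(I)$, and then show that it is least among all sound intervals. We interpret $\max$ and $-$ over the fixed bit-vector ordering used in the encoding (signed or unsigned, consistently), so that each objective ranges over a finite, totally ordered set. Because $\varphi$ is satisfiable and the domain of each $x_j$ is finite, every objective attains its maximum. Hence the optima $u_j$ and $\ell_j$ exist, and by the definition of boxed OMT each is witnessed by some model $M$ of $\varphi$.

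\textbf{Soundness.} Let $s \in \left[\!\left[ \varphi \right]\!\right]$ be any model. By maximality of the optimum for objective $x_j$, we have $s(x_j) \leq u_j$. By maximality for objective $-x_j$, we have $-s(x_j) \leq \max(-x_j)$, hence $s(x_j) \geq \ell_j$. This holds for every $j$, so $s \in \gamma(I)$.

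\textbf{Minimality.} Let $J = \bigwedge_j a_j \leq x_j \leq b_j$ be any interval element with $\left[\!\left[ \varphi \right]\!\right] \subseteq \gamma(J)$. Take the model $M_j^{+}$ that attains $u_j$. It lies in $\gamma(J)$, so $u_j = M_j^{+}(x_j) \leq b_j$. Symmetrically, the model $M_j^{-}$ attaining $\max(-x_j)$ satisfies $M_j^{-}(x_j) = \ell_j \geq a_j$. Hence each component of $I$ is contained in the corresponding component of $J$, so $\gamma(I) \subseteq \gamma(J)$, and $I \leq_A J$ in the componentwise interval order. Combined with soundness, $I$ is the least sound interval abstraction. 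Note that $I$ is not bottom, since $\varphi$ is satisfiable; this also gives $\ell_j \leq u_j$.

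\textbf{Main obstacle.} The delicate step is the negation $-x_j$ in bit-vector arithmetic. Two's-complement negation is not order-reversing at the minimum value, because $-\mathrm{INT\_MIN} = \mathrm{INT\_MIN}$, and unsigned negation wraps. So the identity $\ell_j = -\max(-x_j) = \min x_j$ is not automatic. I would handle this by reading the objective $-x_j$ as minimization of $x_j$, which is what OMT solvers actually do and what the intended semantics is. Equivalently, one can evaluate negation in a wider or mathematical integer representation where it is an order-reversing bijection. With that reading, $\ell_j = \min_{s \in \left[\!\left[ \varphi \right]\!\right]} s(x_j)$, and the argument above goes through unchanged.
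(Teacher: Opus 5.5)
Your proof is correct and follows the same route as the paper's sketch. Soundness comes from optimality of the objectives $x_j$ and $-x_j$, and minimality holds because any sound interval must contain the satisfying values attaining $u_j$ and $\ell_j$. Your explicit witness models, the finiteness argument for existence of the optima, and your remark that $-x_j$ must be read as minimization of $x_j$ (since wrap-around negation is not order-reversing) make precise what the paper uses tacitly when it passes from $-\sigma(x_j)\leq -\ell_j$ to $\ell_j\leq\sigma(x_j)$.
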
}

\revise{
\noindent\textit{Proof sketch.}
Soundness follows directly from optimality: for every state $\sigma \in \left[\!\left[ \varphi \right]\!\right]$, $\sigma(x_j) \leq u_j$ and $-\sigma(x_j) \leq -\ell_j$, hence $\ell_j \leq \sigma(x_j) \leq u_j$. Minimality follows because any sound interval abstraction must contain every satisfying value of each $x_j$; therefore, its upper bound for $x_j$ cannot be smaller than $u_j$, and its lower bound cannot be larger than $\ell_j$. Thus no strictly smaller interval element can over-approximate $\left[\!\left[ \varphi \right]\!\right]$.}

\begin{example}
Consider the SMT formula $\varphi(x, y) \equiv x \geq 0 \land y \geq 0 \land x + y \leq 5$, which defines a region where $x$ and $y$ are non-negative, and their sum is bounded by 5. Using the template $\{ x, y, -x, -y \}$, we solve the boxed OMT problem $\text{max } \{ x, y, -x, -y \} \ \text{s.t.} \ \varphi.$
The solution yields the maximum and minimum values of $x$ and $y$, resulting in the precise interval abstraction $x \in [0, 5] \land y \in [0, 5]$.
\end{example}

\revise{In the binary-analysis clients evaluated in this paper, the constraint formula $\varphi$ can arise from path conditions, instruction semantics, register and memory expressions, and control-flow guards.} While boxed OMT provides a general framework for \revise{
these} problems, the computational overhead can be significant for large and complex formulas. 
To mitigate this, we propose parallel solving strategies 
to enhance scalability for real-world applications.


%% file: 3overview.tex
\section{Overview}
\label{sec:overview}
In this section, we first recall a basic algorithm for optimizing a single objective, and then discuss parallel optimization of multiple objectives. To simplify the presentation, we assume that all the bit-vectors discussed are unsigned.

\subsection{Optimizing One Objective}

\begin{algorithm}[t]
	\caption{OBV-BS for solving single-objective OMT(BV) problems}
         \label{alg:basic}
	\KwIn{A quantifier-free bit-vector formula  $\varphi$ and an objective $g$}
	\KwOut{An optimal model of $\varphi$ that maximizes the objective function $g$}
 	\SetKwFunction{obv-bs}{OBV-BS}
	\SetKwProg{Fn}{Function}{}{}
	\Fn{OBV-BS{($\varphi, g$)}}{
 	Boolean formula  $\phi$ $\leftarrow$ bit-blast the bit-vector formula $\varphi$\;
	Boolean variables $T = \{t_{n-1}, \ldots, t_0\}$ $\leftarrow$ bit-blast the objective $g$\;
	$S \leftarrow Solver(), S.add(\phi)$\;
 	$lits \leftarrow \{ \}$\tcp*{the assumption literals tracking the progress}
        \label{line:lits}
	$S.check(), M \leftarrow S.get\_model()$\;
	\For{$i \leftarrow n -1$ {\rm \textbf{downto}} 0 {\rm \textbf{step}} 1}{
		\tcp*{examine whether $t_i$ can be 1 until finding the maximum value of $g$, i.e., maximizing $f(t_{n-1}, \ldots, t_0) = 2^{n-1}*t_{n-1} + \cdots + 2^0 * t_0$}
		\If{$M.eval(t_i) == true$}{ 
                \label{line:check_model_ti}
			$lits \leftarrow lits \cup \{ t_i\}$\tcp*{update the assumptions using $t_i = 1$}
                 \label{line:check_model_success}
		} \Else{
			\If{$S.check(lits \cup \{t_i \}) == SAT$} {
                    \label{line:check_next_bit}
				$M \leftarrow S.get\_model()$\;
                      $lits \leftarrow lits \cup \{t_i\}$\tcp*{update the assumptions using $t_i = 1$}
			} \Else{
				$lits \leftarrow lits \cup \{ \neg t_i\}$\tcp*{update the assumptions using $t_i = 0$}
                 \label{line:set_ti_zero}
			}
		}
	}
	\Return $M$\; 
	}
\end{algorithm}
We begin by outlining the basic algorithm for optimizing a single objective, proposed by \citet{nadel2016bit} (denoted the OBV-BS algorithm).
The goal is to maximize an $n$-bit objective with respect to a bit-vector formula $\varphi$. The key insight behind OBV-BS is to reduce the OMT(BV) problem, i.e., ``$\text{max} \ g \ \text{s.t.} \ \varphi$'' to a weighted MaxSAT instance through bit-blasting, and then employ a SAT-based procedure for optimization.

\smallskip
\noindent \textbf{Constraint Encoding}.
Given a formula $\varphi$ and an $n$-bit optimization target $g^{[n]}$, we encode $\varphi$ as a hard Boolean formula $\phi$ via bit-blasting~\cite{barrett2018satisfiability}. The objective $g$ is encoded as a set of Boolean variables $T = {t_{n - 1}, \ldots, t_0}$, where $t_i$ represents the $i$-th bit of the bit-vector term $g$, with $t_0$ being the least significant bit (LSB). For each $t_i$, a soft weighted unit clause $(t_i)$ with weight $2^i$ is added.  This yields the weighted MaxSAT formulation:
$$
\left\{
\begin{array}
{lr}
\text{Hard clauses } & \phi  \text{ : a Boolean formula translated from } \varphi \\
\text{Soft clauses }   & 
\left\{
\begin{array}
{lr}
(t_0 \ \  \text{  weight 1}) \  \quad \quad     & \\ 
(t_1 \ \  \text{  weight 2}) \   \quad \quad     & \\ 
\quad \ \ \  \ \ \ldots \ \ \ \ \ \ \ \quad  \quad \quad  &\\
(t_{n - 1} \text{ weight } 2^{n-1}) \quad  \quad   \quad  &
\end{array}
\right.
\end{array}
\right.
$$

In other words, solving the \revise{OMT(BV)} problem boils down to maximizing the objective function $f(t_{n - 1}, \ldots, t_0) = 2^{n - 1}*t_{n - 1} + \cdots + 2^0 * t_0$ subject to $\phi$.

\begin{example}
\label{example:2-bit}
Consider a simple 2-bit unsigned bit-vector $x$ and the formula $\varphi \equiv x > 1$. Intuitively, the maximum value of $x$ subject to $\varphi$ is 3. 
We encode $x$ as two Boolean variables $t_1$ and $t_0$, with $t_1$ representing the most significant bit (MSB) and $t_0$ representing the least significant bit (LSB). 
The formula $\varphi$ is bit-blasted as $t_1 \lor t_0$, and the corresponding MaxSAT instance is as follows:
    $$
\left\{
\begin{array}
{lr}
\text{Hard clauses }   & \text{Boolean formula }  t_1 \lor t_0  \\
\text{Soft clauses }   & 
\begin{array}
{lr}
(t_0 \ \  \text{  weight 1}) \   \quad \quad     & \\ 
(t_1 \ \  \text{  weight 2}) \   \quad \quad     & \\ 
\end{array}
\end{array}
\right.
$$

After maximizing the objective function $f(t_1, t_0) = 2*t_1 + t_0$, we obtain the model $[t_1 = 1, t_0 = 1]$, which corresponds to the maximum value of $x = 3$.
\end{example}

\smallskip
\noindent \textbf{Constraint Solving}.
Algorithm~\ref{alg:basic} illustrates the core approach of~\cite{nadel2016bit}, which maximizes the objective function $f$ through iterative maximization of bits from MSB to LSB.
The basic idea is straightforward: the higher the bit, the greater its weight in the objective function.  
The algorithm proceeds as follows:
\begin{itemize}
     \item \emph{Initialization}: The algorithm starts by initializing an empty set of assumption literals $lits$ (Line~\ref{line:lits}),  which track the bits that can be true. The solver solves the formula $\varphi$ and retrieves a model $M$.
     \item \emph{Iterative maximization}: The algorithm iterates from the most significant bit $t_{n-1}$ down to the least significant bit $t_0$. For each bit $t_i$, it checks if $t_i = 1$ in the model $M$ (Line~\ref{line:check_model_ti}). If true, $t_i$ is added to $lits$, and the process continues to the next bit (Line~\ref{line:check_model_success}). If $t_i = 1$ does not hold in $M$, the algorithm checks if the formula $\phi$ is satisfiable under the assumptions of $lits \cup {t_i}$ (Line~\ref{line:check_next_bit}). If satisfiable, the model $M$ is updated, $t_i$ is set to 1, and $ t_i$ is appended to $lits$. 
     Otherwise, $\neg t_i$ is added to $lits$ (Line~\ref{line:set_ti_zero}).
    \item \emph{Termination}: This process continues until all bits are assigned values. 
    The set $lits$ yields the model that maximizes the value of the objective function, i.e., $2^{n-1}*t_{n-1} + \cdots + 2^0 * t_0$.
\end{itemize}

\begin{example} 
\revise{
Consider a 3-bit objective represented by $t_2,t_1,t_0$ and the Boolean formula $\varphi = (\neg t_2 \lor \neg t_1) \land (t_1 \lor t_0)$. OBV-BS first tests the most significant bit $t_2$. Setting $t_2=1$ is satisfiable, so the algorithm fixes $t_2$. It then tests $t_1$ under this assumption. The conjunction $t_2 \land t_1$ violates the first clause, so $t_1$ is fixed to 0. Finally, $t_0=1$ is required by the second clause and is satisfiable. The resulting maximum is therefore $101_2=5$. The example illustrates why bits within one objective must be decided in significance order: the decision for $t_1$ depends on the earlier decision for $t_2$.}
\end{example}

\subsection{Problem Statement}
The above algorithm can be extended to optimize multiple independent objectives. A straightforward approach is to maximize the objectives sequentially \revise{using the OBV-BS algorithm, Figure~\ref{fig:para_1}}. While effective for cases with few objectives, this approach becomes computationally prohibitive as the number of objectives grows, hindering scalability.

This work aims to scale boxed OMT (BV) solving through parallelization.
We first sketch a simple, \emph{objective-level partitioning} strategy that assigns each parallel worker a distinct objective to optimize independently (\cref{subsec:obj-div}).
We then introduce a finer-grained \emph{bit-level partitioning} algorithm, which partitions the problem at the bit level (\cref{subsec:bit-div}). In this approach, each worker is tasked with optimizing specific bits across multiple objectives, enabling greater parallelism.

%% file: 4divide.tex
\section{Approach}
\label{sec:divide-and-conquer}

This section presents our parallel optimization strategies under the divide-and-conquer paradigm. We first partition the problem at the objective level and subsequently extend our approach to a finer, bit-level parallelization.

\subsection{Objective-Level Divide-and-Conquer}
\label{subsec:obj-div}
We begin with a straightforward approach that partitions the search space by objectives. 
\revise{
The basic idea is that each boxed objective can be optimized as an independent problem and therefore can be assigned to a parallel worker.}

\begin{algorithm}[t]
	\caption{Objective-level, divide-and-conquer parallel solving for abstract transformer synthesis}
	\label{algo:queue}
	\KwIn{A formula $\varphi$, a set of objectives $G = \{ g_1,\ldots, g_n \}$,  and $k$ workers}
	\KwOut{The maximal values of $g_1, \ldots, g_n\ s.t.\ \varphi$}
	
	\SetKwFunction{queue}{solve-by-objs}
	\SetKwProg{Fn}{Function}{}{}
	\Fn{\queue{$\varphi,G$,k}}{
            $q  \gets [\ ]; res \gets [\ ]$\;
            \tcp*{$q$ an empty queue; $res$ stores the result of objectives}
            \ForEach{$g_i \in G$}{
			$enqueue(g_i, i)$ \tcp*{Enqueue each objective}
		}
		\For{$i \in \{1,\dots,k\}$}{
                launch a worker that executes\ $solve(\varphi,q,res)$\;
		}
            wait for all workers to finish\; 
            \Return $res$\;
	}
 
        \SetKwFunction{solve}{solve}
        \SetKwProg{Fn}{Function}{}{}
        \Fn{\solve{$\varphi,q,res$}}{
            \While{q \text{ is not empty}}{
                $g_i,i \gets q.get()$ \tcp*{Dequeue an objective $g_i$ and its index $i$}
                $res[i] \gets \text{OBV\_BS}(\varphi,g_i)$ \tcp*{Solve $g_i$ using the OBV\_BS algorithm}
                \tcp*{store the result of $g_i$ in $res[i]$}
            }
            \Return\;
        }
\end{algorithm}

\smallskip
\noindent \textbf{Basic Procedure}.
Algorithm~\ref{algo:queue} shows how multiple objectives can be optimized in parallel. Given a formula $\varphi$ and a set of objectives $G = \{g_1, g_2, \dots, g_n\}$, the algorithm returns a vector of maximum values for all objectives in~$G$. Each objective is enqueued, and $k$ workers are launched. Each worker repeatedly dequeues one objective at a time, invokes the OBV-BS algorithm to optimize it, and stores the result before moving on to the next objective. 

\revise{The \texttt{solve} function is the core routine executed by parallel workers. Specifically, the main program uses the \textbf{for} loop at Line~7 to spawn $k$ concurrent worker threads. Here, the index $i$ serves solely as a counter for instantiation, meaning each spawned worker simply invokes the \texttt{solve} function independently without needing $i$ as a parameter. Within the \texttt{solve} function, the \textbf{while} loop at Line~12 is responsible for processing the objectives. As long as the shared task queue is not empty, each worker utilizes this loop to dynamically dequeue an objective and compute its optimal value.}
\begin{example}
\label{exmp:idea}
Consider an example with three objectives \( G = \{g_1, g_2, g_3\} \) and two parallel workers \( p_1 \) and \( p_2 \). As illustrated in Figure~\ref{fig:para_2},
\begin{itemize}
    \item At the first step, \( p_1 \) dequeues and solves \( g_1 \), while \( p_2 \) works on \( g_2 \).
    \item After \( p_1 \) finishes solving \( g_1 \), it stores the result \( r_1 \) and proceeds to solve \( g_3 \). Once \( p_2 \) completes \( g_2 \), it terminates, as no more objectives remain in the queue. \( p_1 \) finishes solving \( g_3 \) shortly thereafter.
\end{itemize}
\end{example}

\smallskip
\noindent \textbf{Limitations of the Strategy}.
The above objective-level parallel algorithm exhibits several limitations:
\begin{itemize}
    \item \emph{Imbalanced Workload Distribution}: Variability in objective complexity can lead to uneven computational effort across workers, resulting in idle resources for some and excessive load for others.
   \item \emph{Limited by Sequential Performance}: While the OBV-BS algorithm performs well for many practical instances, its scalability is constrained for complex instances, potentially diminishing the benefits of parallelism.
\end{itemize}

    \begin{example}
Consider again the example from Example~\ref{exmp:idea}, where we have three objectives ($g_1 = \{a_1, a_0\}$, $g_2 = \{b_1, b_0\}$, and $g_3 = \{c_1, c_0\}$), and two parallel workers $p_1$ and $p_2$. 
Figure~\ref{fig:para_2} illustrates the solving status over time.
\begin{itemize}
    \item Workers $p_1$ and $p_2$ can solve $g_1 = \{a_1, a_0\}$ and $g_2 = \{b_1, b_0\}$ in parallel. Assume they finish at approximately the same time.
    \item After finishing $g_1$, worker $p_1$  continues with $g_3$, while $p_2$ becomes idle, as no more objectives are left to solve.
\end{itemize}
\end{example}

The critical reason is that although the objectives \revise{
can be
scheduled as independent optimization problems}, the individual bits within each objective (e.g., $b_1$ and $b_0$) exhibit strict dependencies. Unfortunately, Algorithm~\ref{algo:queue} does not account for these intra-objective dependencies, limiting the effectiveness of parallelization.

\subsection{Bit-level Divide-and-Conquer Solving}
\label{subsec:bit-div}
To address the limitations of objective-level parallelism, we introduce a more granular approach: bit-level divide-and-conquer solving. This method enables finer-grained parallelism by processing individual bits across multiple workers.

\begin{algorithm}[t]
	\caption{Bit-level, divide-and-conquer parallel solving for abstract transformer synthesis}
	\label{algo:bit-level}
	\KwIn{A formula $\varphi$ and a set of objectives $G = \{ g_1,\ldots, g_n \}$, $k$ workers}
	\KwOut{The maximal values of $g_1, \ldots, g_n\ s.t.\ \varphi$}
	
	\SetKwFunction{bit}{bit-level-para}
	\SetKwProg{Fn}{Function}{}{}
	\Fn{\bit{$\varphi,G$,k}}{
            $\textcolor{red}{q} \gets \cup_{g_i\in G} i$  \tcp*{mark the undecided objectives 
            }
            $\textcolor{red}{res} \gets [[\ ]*n ]$  \tcp*{store the result of $n$ objectives 
            }
		\For{$i \in \{1,\dots,k\}$}{
                launch  a worker that executes $solve(\varphi,G,q,res)$\;
		}
            wait\ for\ all\ workers\ finish\;
            \Return $res$\;
	}
 
        \SetKwFunction{solve}{solve}
        \SetKwProg{Fn}{Function}{}{}
        \Fn{\solve{$\varphi,G,q,res$}}{
            $S \gets Solver(),S.add(\varphi)$\;
            $S.check(),M \gets S.get\_model()$\;
            \While{$q$}{
                $i \gets q.get(),\textcolor{red}{r \gets res[i]}$  \tcp*{get the latest result of $g_i$}
                \label{line:begin}
                $lit \gets $ the\ next\ undecided\ bit\ of\ $g_i$\;
                \If{$r \cup lit\ in\ M$}{
                    $r.append(lit)$  \tcp*{update the local result}
                    \label{line:r.append(lit)}
                }\Else{
                    \If{$S.check(r\cup lit)\ ==\ SAT$}{
                    \label{line:s.check}
                        $r.append(lit)$  \tcp*{update the local result}
                        \label{line:check_append(lit)}
                        $M \gets S.get\_model()$  \tcp*{update the model $M$}
                    }\Else{
                        $r.append(\neg lit)$  \tcp*{update the local result}
                        \label{line:check_append(neglit)}
                    }
                }
                $\textcolor{red}{res[i]\gets r}$  \tcp*{update the result of $g_i$}
                \If{$lit\ is\ not\ the\ last\ bit\ of\ g_i$}{  
                \tcp*{$g_i$ has not been decided}
                    $enqueue(i)$\;
                    \label{line:enqueue_i}
                }
            }
            \Return\;
        }
\end{algorithm}

\begin{figure}[t]
	\centering 
	\begin{subfigure}{0.48\textwidth}
		\includegraphics[width=\linewidth]{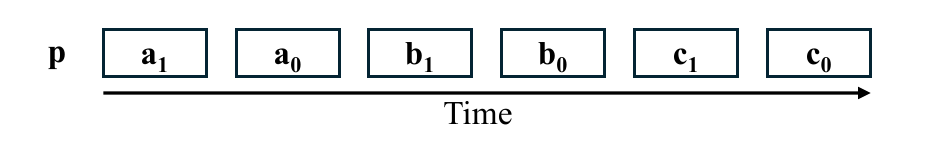}	
		\caption{Sequential solving (Algorithm~\ref{alg:basic} on each objective)}
		\label{fig:para_1}
	\end{subfigure}
	\begin{subfigure}{0.48\textwidth}
		\includegraphics[width=\linewidth]{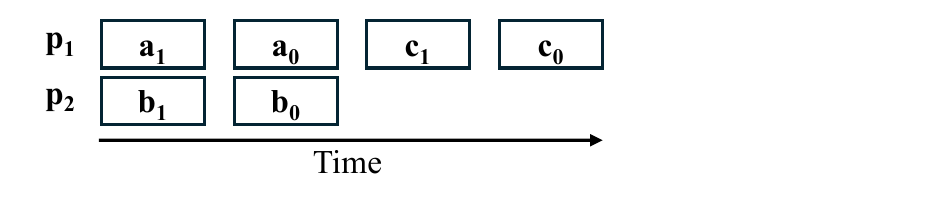}
		\caption{Objective-level parallel solving (Algorithm~\ref{algo:queue})}
		\label{fig:para_2}
	\end{subfigure}
        \begin{subfigure}{0.48\textwidth}
		\includegraphics[width=\linewidth]{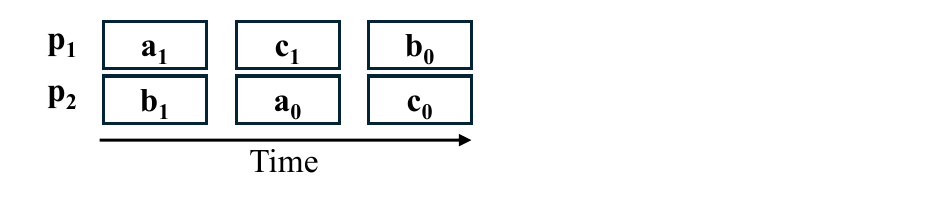}
		\caption{Bit-level parallel solving (Algorithm~\ref{algo:bit-level})}	
		\label{fig:para_3}
	\end{subfigure}
	\caption{Comparing the algorithms for $g_1  = \{a_1, a_0\}, g_2 = \{b_1, b_0\}, g_3 = \{c_1, c_0\}$}
	\label{fig:para}
\end{figure}

\smallskip
\noindent \textbf{Exploiting Bit-Level Independence}.
Our core idea is to leverage the independence of bits across objectives. \revise{
While intra-objective bits (e.g., $b_1$ and $b_0$) have inherent dependencies because OBV-BS must decide them from MSB to LSB, boxed OMT imposes no ordering between the bit decisions of different objectives.} This independence enables workers to process bits concurrently, significantly increasing parallelism.

\begin{example}
Consider the example from the previous section (Example~\ref{exmp:idea}). In this case, the bits $a_1$ and $a_0$ within the same objective are dependent because the value of $a_1$ must be determined before $a_0$. \revise{
However, bits from different objectives can be scheduled concurrently.} For example, in 
Figure~\ref{fig:para_2} and Figure~\ref{fig:para_3}, the bits $a_1$ and $c_1$ can be processed simultaneously without conflict. \revise{
Similarly, in Figure~\ref{fig:para_3}, the decisions for $a_0$ and $b_1$ can proceed in parallel because they belong to different objectives.}
\end{example}

Unfortunately, the objective-level partitioning strategy cannot leverage the independence between \revise{\sout{$a_0$ and $a_1$} bits from different objectives}, thereby limiting the achievable parallelism.

\smallskip
\noindent \textbf{Bit-Level Parallel Search Algorithm}.
Based on the above idea, we introduce the bit-level divide-and-conquer strategy (Algorithm \ref{algo:bit-level}), which processes each objective at the bit level, enabling concurrent processing of bits from different objectives. 
The algorithm has the following key components.

\smallskip 
\noindent \emph{Initialization}. The algorithm initializes a queue of undecided objectives, $q$, and a shared result list, $res$, to store the partial solutions for each objective. 
The algorithm then spawns $k$ parallel workers. Each worker is assigned an undecided objective, processes it one bit at a time, and shares their progress with other workers through the shared result list.

\smallskip 
\noindent \emph{Bit Decision Process}.  Each parallel worker performs the following steps:
\begin{itemize}
    \item Selects an undecided objective $g_i$ from the queue; Retrieves the current partial solution (Line~\ref{line:begin}).
    \item Processes the next undecided bit, represented as $lit$: If the bit $lit$ is already included in the model, it is appended to the local result (Line~\ref{line:r.append(lit)}). Otherwise, the worker checks if appending $lit$ to the partial solution satisfies the overall constraints. If it does, $lit$ is appended to the result (Line ~\ref{line:check_append(lit)}); if not, its negation, $\neg lit$, is appended instead (Line~\ref{line:check_append(neglit)}).
\end{itemize}

\begin{figure*}[t]
	\centering 
	\includegraphics[width=0.78\textwidth]{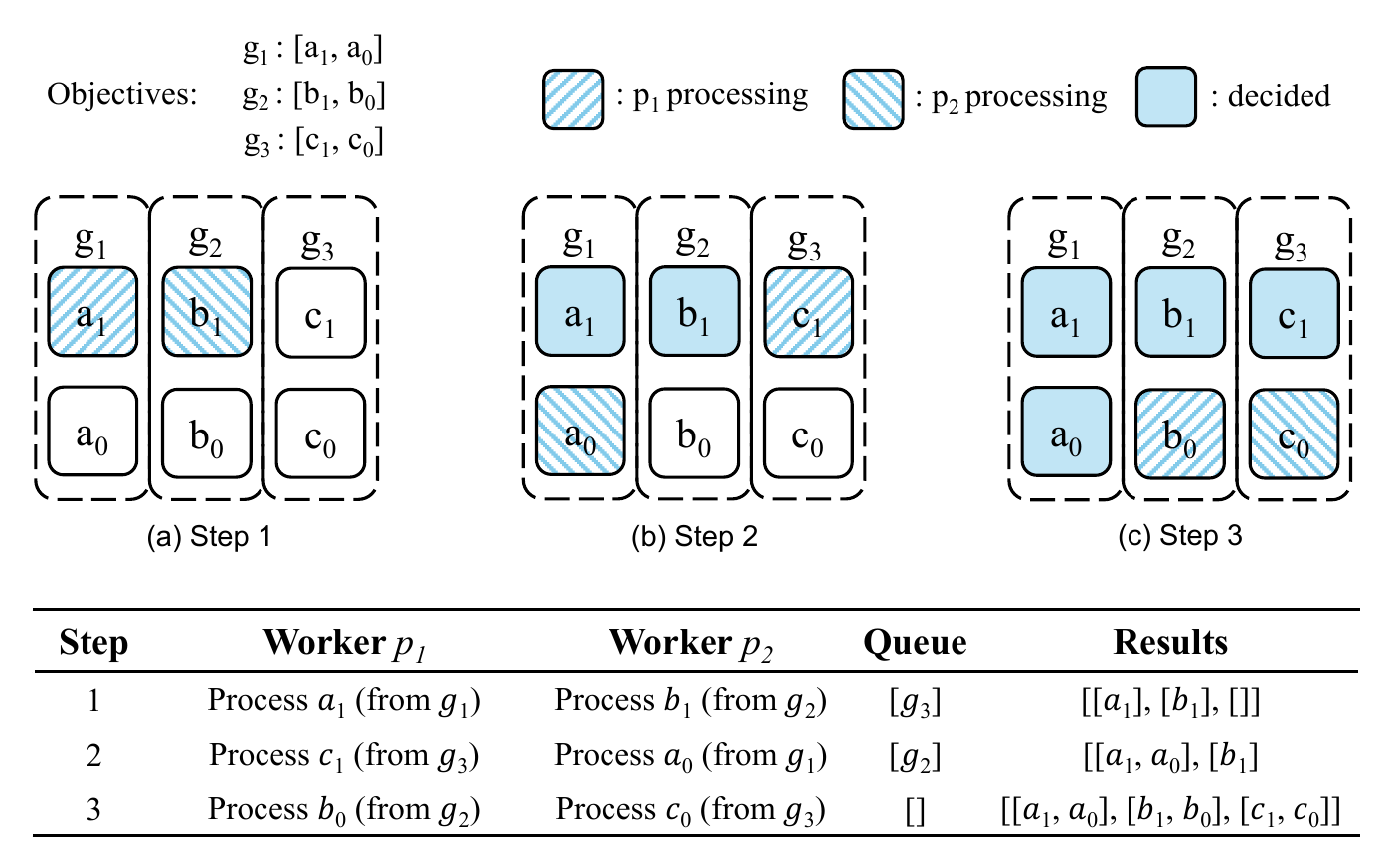}	
	\caption{An example of running Algorithm~\ref{algo:bit-level}}
	\label{fig:bits-par}
\end{figure*}

\begin{example}
Consider the instance again in Example~\ref{exmp:idea}, which has three objectives, each consisting of two bits: $g_1 = \{a_1, a_0\}, g_2 = \{b_1, b_0\},  g_3 = \{c_1, c_0\}$.
This example assumes two available workers, \( p_1 \) and \( p_2 \).
In addition to Figure~\ref{fig:para_3}, we show more details about running the algorithm in Figure~\ref{fig:bits-par}.

\begin{itemize}
   \item In \text{Step 1}, the algorithm initializes with an empty results list and a queue containing all three objectives: \( [g_1, g_2, g_3] \). Worker \( p_1 \) dequeues \( g_1 \) and processes its first bit, \( a_1 \). The worker checks whether assigning \( a_1 = 1 \) satisfies the formula \( \varphi \). If so, \( a_1 = 1 \) is appended to the result for \( g_1 \); otherwise, \( a_1 = 0 \) is appended. Concurrently, worker \( p_2 \) dequeues \( g_2 \) and processes its first bit, \( b_1 \), in a similar manner. After this step, the queue contains only \( g_3 \), and the results list is updated to \( [[a_1], [b_1], []] \).
   
   \item In \text{Step 2}, worker \( p_1 \) processes the first bit of \( g_3 \), which is \( c_1 \). Simultaneously, worker \( p_2 \) returns to \( g_1 \) and processes its second bit, \( a_0 \). \revise{
   After both workers complete their tasks, \( g_1 \) is finished, while both \( g_2 \) and \( g_3 \) still have undecided low bits. The queue therefore contains \( [g_2,g_3] \). The results list is updated to \( [[a_1, a_0], [b_1], [c_1]] \).} 
   
   \item In \text{Step 3}, worker \( p_1 \) processes the second bit of \( g_2 \), which is \( b_0 \), while worker \( p_2 \) processes the remaining bit of \( g_3 \), which is \( c_0 \). At this point, the queue is empty, as all objectives have been fully processed. The final results are \( [[a_1, a_0], [b_1, b_0], [c_1, c_0]] \). 

\end{itemize}
\end{example}

The above strategy overcomes many limitations of objective-level parallelism by decomposing objectives into individual bits. It exploits inherent bit-level independence across different objectives, enabling greater concurrency and more efficient utilization of computational resources.

\smallskip
\noindent \textbf{Scheduling Considerations}.
Bit-level parallelism addresses several shortcomings of objective-level parallelism but also faces several challenges.
The fine-grained nature may incur overhead due to frequent task scheduling and synchronization: Each worker must regularly poll the task queue, retrieve bits, and update the shared result array, which can lead to inefficiencies.
To mitigate excessive synchronization, we extend Algorithm~\ref{algo:bit-level} to optionally group multiple bits (up to a user-defined parameter $n$) into a single task. This reduces the frequency of queue operations at the cost of somewhat coarser parallelism.
We will evaluate the impact of choosing different $n$ in the following section.

\section{\revise{Implications and Applications}}
\label{sec:applications}

We have implemented \ToolName\ using Z3~\cite{de2008z3} and PySAT~\cite{ignatiev2018pysat}. Z3 is responsible for bit-blasting the bit-vector formulas and objectives into Boolean constraints and variables, while PySAT provides the SAT solving infrastructure.

\revise{
The objectives are derived from the client's analysis as the numeric quantities the client wants to maximize or minimize. Concretely, each OMT query is produced from a client-generated path condition, and the boxed objectives correspond to the bounds that are semantically relevant to that client. }

\smallskip
\noindent \revise{\textbf{Constrained Random Fuzzing.} We implement Pangolin’s template-polyhedron abstraction of path conditions~\cite{huang2020pangolin} on top of Angr~\cite{shoshitaishvili2016sok}. The OMT instances arise from symbolic execution along a selected set of program paths. For each path prefix targeted by the fuzzer, Angr constructs a bit-vector path condition over the symbolic state and instantiates a fixed polyhedral template over the input variables. It then treats the template coefficients (equivalently, the facet bounds) as optimization objectives. Optimizing these objectives yields a family of constraints that delimit the admissible region of the relevant inputs, thereby concentrating mutations on values that remain consistent with the chosen prefix. The resulting boxed OMT query is obtained directly from the path condition, with one objective per template coefficient, and its solution parameterizes the template polyhedron that generates high-coverage seeds.}

\smallskip
\noindent \revise{\textbf{Static Binary Bug Detection.} In Clearblue~\cite{ye2024manta,zhou2024plankton}, OMT queries are derived from the block-level transfer functions of the static analysis. For each selected function or basic block, the analyzer constructs a bit-vector formula that captures the precise semantics of register and memory updates (including aliasing and arithmetic at machine width), and then optimizes for the tightest interval bounds on the state components of interest. The objectives are the lower and upper bounds of these registers and memory values, since these bounds directly control whether the analysis can discharge safety checks, such as out-of-bounds accesses or arithmetic overflow. \ToolName\ solves the resulting boxed objectives and returns a sharper symbolic interval abstraction, which the client uses to strengthen invariants and reduce false positives.}

%% file: 5evaluation.tex
\section{Evaluation}
\label{sec:eval}

Our evaluation is guided by the following research questions:
\begin{itemize}
    \item \textbf{RQ1:} How does \ToolName\ compare to state-of-the-art OMT(BV) solvers in terms of solved instances, runtime, and scalability? (\cref{subsec:eval:other})
    \item \textbf{RQ2} How do different partitioning granularities affect the performance of \ToolName? (\cref{subsec:eval:ablation})
\end{itemize}

\noindent \textbf{Benchmarks}. 
\revise{
Table~\ref{tbl:benchmarks} provides relevant information about the benchmarks, including the source programs from which the OMT queries are generated, the number of queries, the average number of objectives, the range of objective bit-widths, and the range of clause counts. These queries are generated using two analyzers:} (1) Angr \cite{shoshitaishvili2016sok}, a binary analysis framework with a symbolic execution engine for hybrid fuzzing. We implement a template polyhedron abstraction of path conditions~\cite{huang2020pangolin}, thereby improving seed mutations. (2) Clearblue~\cite{ye2024manta,zhou2024plankton}, a static binary analyzer that translates binaries into LLVM IR for vulnerability detection.
We use symbolic abstraction to compute symbolic interval abstractions of register and memory values to detect vulnerabilities. We exclude queries that can be solved by sequential Spear within 5 seconds.

\revise{Note that all extracted queries are formulated as independent multi-objective optimization problems, which is the primary focus of \ToolName. We therefore exclude traditional single-objective OMT benchmarks (e.g., those from Nadel et al.~\cite{nadel2016bit}), since our parallelization strategies are tailored to multi-objective settings.}


\smallskip
\noindent \textbf{OMT(BV) Solvers}.
We evaluate \ToolName\ against two state-of-the-art and widely-used solvers for boxed OMT(BV):
(1) $\nu$Z~\cite{bjorner2015nuz}: An extension of Z3 supporting optimization problems, which reduces OMT(BV) to weighted MaxSAT solving. We employ a parallelized configuration of $\nu$Z, leveraging Z3's parallel SAT-solving capabilities to distribute SAT queries across multiple cores;
(2) OptiMathSAT~\cite{sebastiani2020optimathsat}: An extension of MathSAT with OMT support. OptiMathSAT includes multiple engines for solving boxed OMT problems. We configure a portfolio-based version by diversifying its solver configurations.
Our objective-level parallel algorithm often falls short of the bit-level one. Hence, this section mainly reports the results of the bit-level algorithm.

\smallskip
\noindent \textbf{Environment}.
All experiments are conducted on a server running Ubuntu 22.04 LTS, equipped with an Intel Xeon E5-2640 v4 CPU (2.40 GHz) and 500 GB of RAM. Each solver is executed once per benchmark instance, with a timeout threshold of 120 seconds. We choose the timeout based on two key observations: (1) the vast majority of queries are resolved within this time frame, and (2) the target application is program analysis, where the analysis of a single program can generate a large number of queries. 

\subsection{Comparison with Existing OMT Solvers}
\label{subsec:eval:other}

\begin{table*}[t]
    \centering
    \caption{\revise{Benchmark information}}
    \resizebox{\textwidth}{!} 
    {
    \begin{tabular}{lccccccc}
    \toprule
    \textbf{Program}  & \textbf{Source} & \textbf{Description} & \textbf{Size (KLoC)}  & \textbf{OMT queries} &\textbf{Avg\_obj} &\textbf{Obj\_bw} &\textbf{Clauses(k)}\\ \midrule
    mcf  & SPEC & Vehicle scheduling& 2 & 211 &16.7 &8-1592 &400-800\\
    crafty  & SPEC & Processor benchmark & 13 & 2 &74.5 &1-64 &414\\
    eon     & SPEC & 3D animation rendering  & 22 & 8 &70.6 &1-64 &330-1000+\\
    gap     & SPEC & Group theory calculations & 36  & 71 &69.1 &1-64 &256-1700+\\
    vortex & SPEC & Object-oriented DBMS & 49  & 7 &44.6 &1-64 &750-1500+\\
    perlbmk & SPEC & A subset of Perl & 73  & 85 &55.1 &1-64 &57-1000+\\ 
    gcc     & SPEC & C optimizing compiler & 135 & 11 &76.6 &1-64 &146-149\\ \hline
    tmux    & OSS  & Terminal multiplexer & 40  & 52 &26.9 &1-2560 &28-128\\
    
    libssh & OSS & C library for ssh & 44 &3 &60 &1-64 &201-4000+\\
    transmission & OSS & BitTorrent client& 88 &5 &44 &1-64 &882-1500+\\
    openssl & OSS & Encryption/decryption lib  & 300 &18 &53.7 &1-64 &17-8000+\\
    wrk     & OSS  & HTTP benchmarking tool & 340 &77 &53.8 &1-64 &148-2700+\\
    glusterfs & OSS & Gluster volume manager& 481 &100 &66.4 &1-64 &108-1000+\\
    php    & OSS &  A scripting language & 863  &322 &26.5 &8-1216 &36-59\\
    wget    & OSS & Software package for retrieving files & 50 & 242 &71.5 &8-8528 &28-102\\
    \bottomrule
    \end{tabular}
   }
    \label{tbl:benchmarks}
\end{table*}

\begin{figure*}[t]
    \centering
    \begin{subfigure}{0.48\textwidth}
        \centering
        \includegraphics[width=0.7\linewidth]{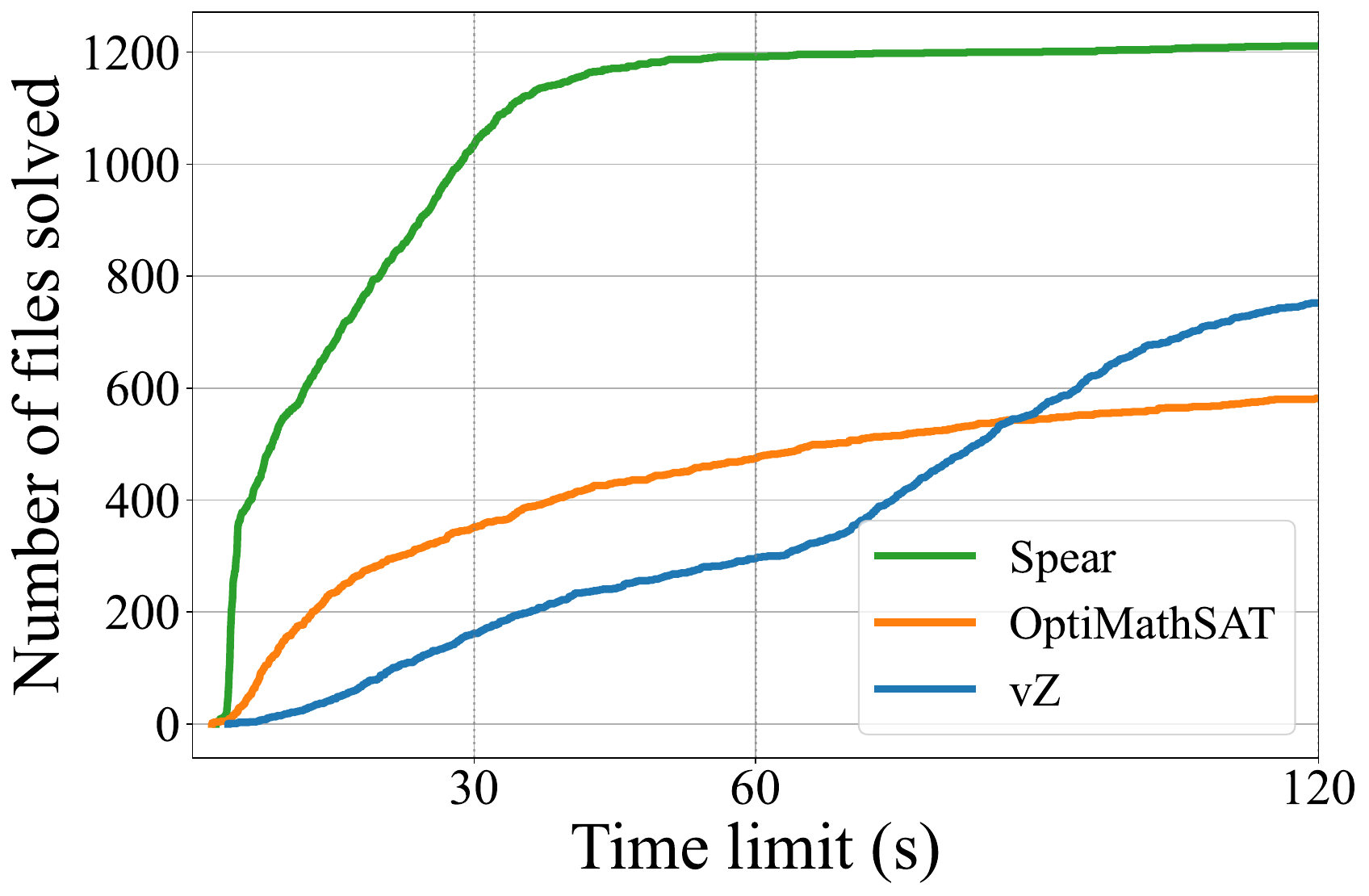}
        \caption{4 cores}
        \label{fig:solvednum-4}
    \end{subfigure}
    \hfill
    \begin{subfigure}{0.48\textwidth}
        \centering
        \includegraphics[width=0.7\linewidth]{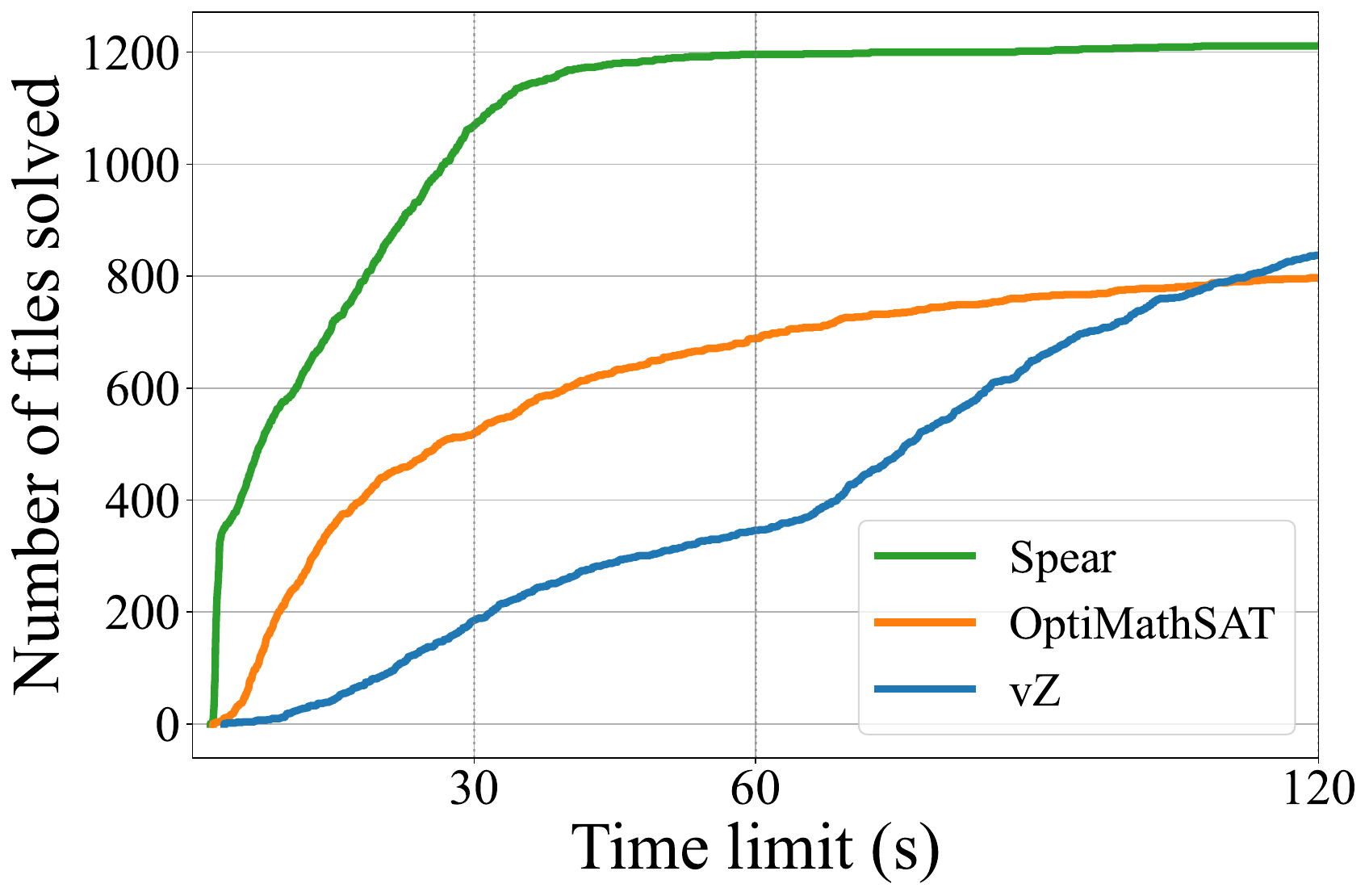}
        \caption{8 cores}
        \label{fig:solvednum-8}
    \end{subfigure} 
    \caption{\revise{Solved Files Comparing \ToolName\ vs. $\nu$Z and OptiMathSAT}}
    \label{fig:solved-num}
\end{figure*}

\begin{figure*}[t]
    \centering
    \begin{subfigure}{0.48\textwidth}
        \centering
        \includegraphics[width=0.7\linewidth]{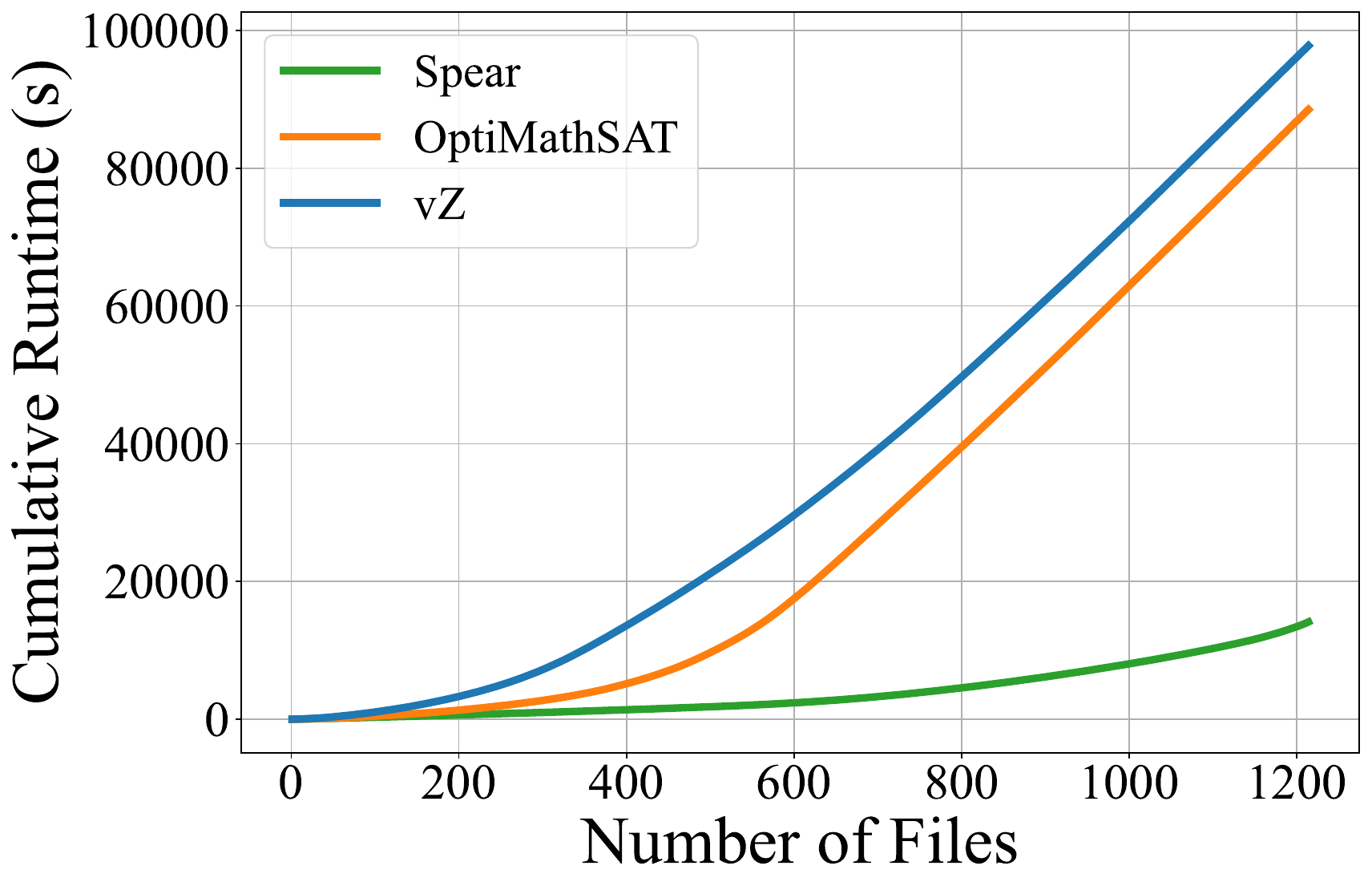}
        \caption{4 cores}
        \label{fig:external-1}
    \end{subfigure}
    \hfill
    \begin{subfigure}{0.48\textwidth}
        \centering
        \includegraphics[width=0.7\linewidth]{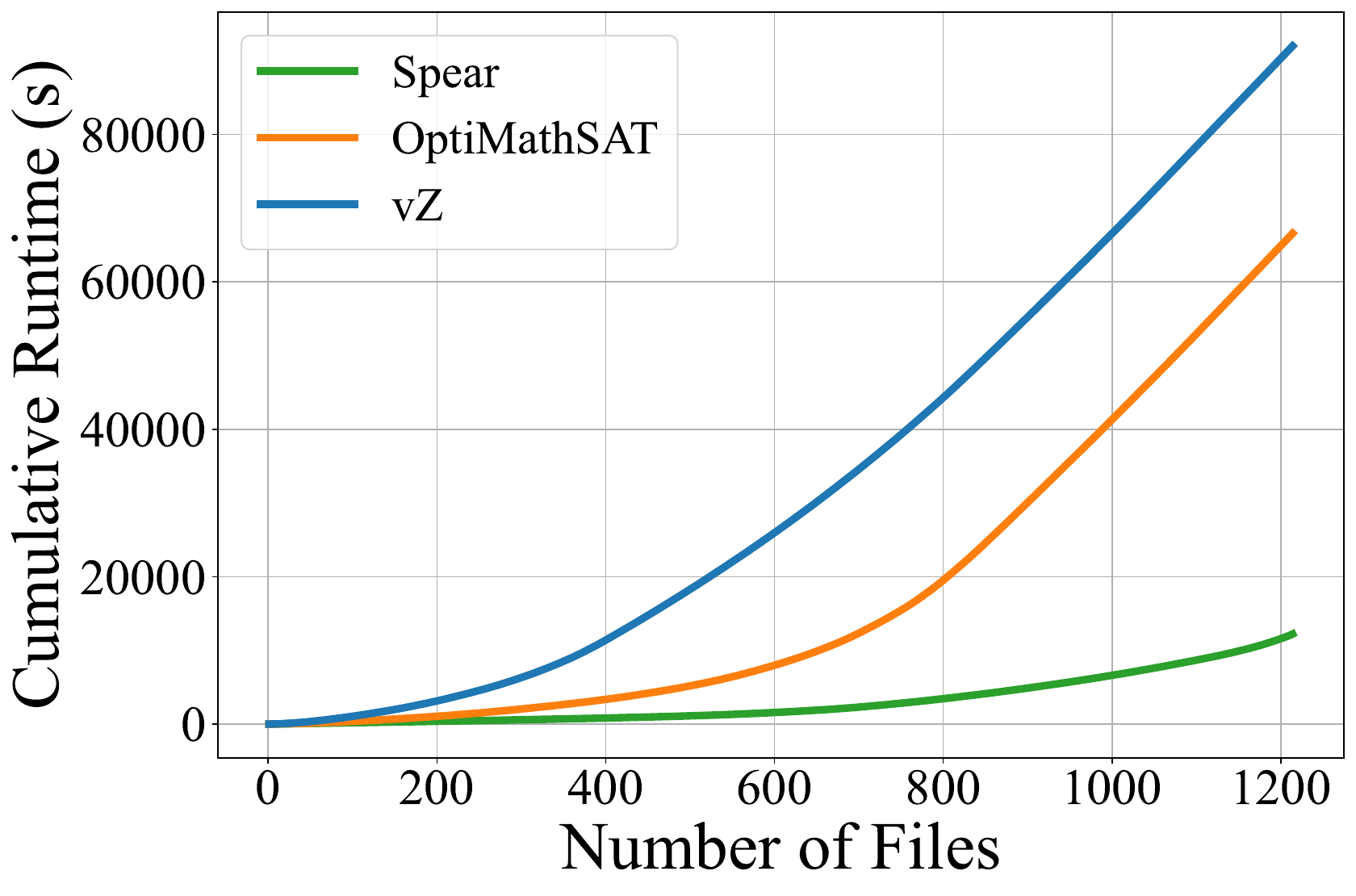}
        \caption{8 cores}
        \label{fig:external-2}
    \end{subfigure} 
    \caption{Comparing \ToolName\ vs. $\nu$Z and OptiMathSAT}
    \label{fig:solved-internal}
\end{figure*}

\begin{figure}[t]
    \centering
    \begin{subfigure}{0.49\linewidth}
        \centering
        \includegraphics[width=\linewidth]{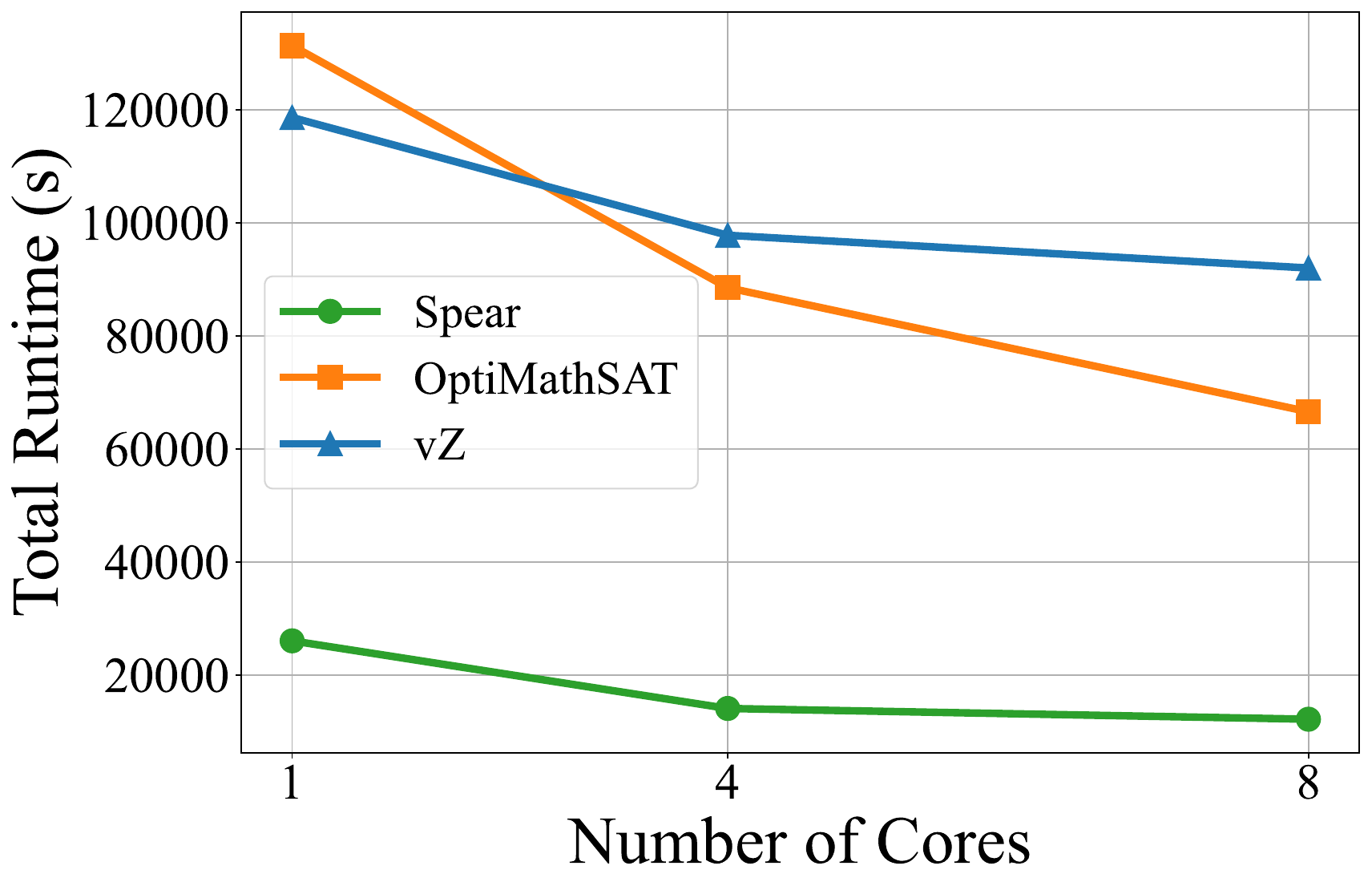}
        \caption{Total runtime}
        \label{fig:scal-total}
    \end{subfigure}
    \hfill
    \begin{subfigure}{0.47\linewidth}
        \centering
        \includegraphics[width=\linewidth]{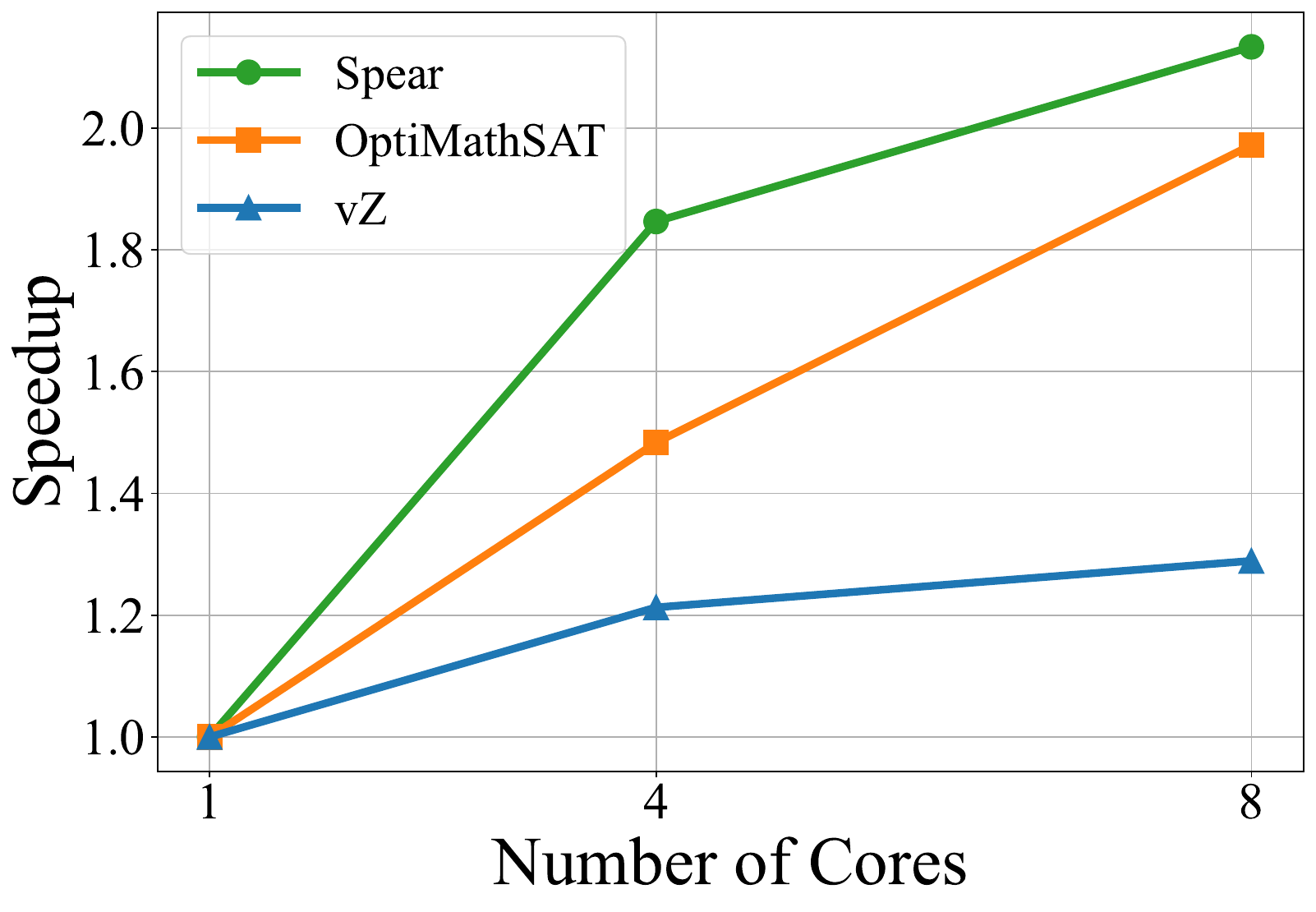}
        \caption{Speedup ($T_1/T_p$)}
        \label{fig:scal-speedup}
    \end{subfigure}
    \caption{\revise{Scalability of \ToolName, $\nu$Z, and OptiMathSAT as the number of cores grows from 1 to 8. Speedup is computed per solver relative to its own single-core total runtime $T_1$.}}
    \label{fig:scalability}
    \vspace{-3mm}
\end{figure}

\begin{figure*}[t]
    \centering 
    \includegraphics[width=\textwidth]{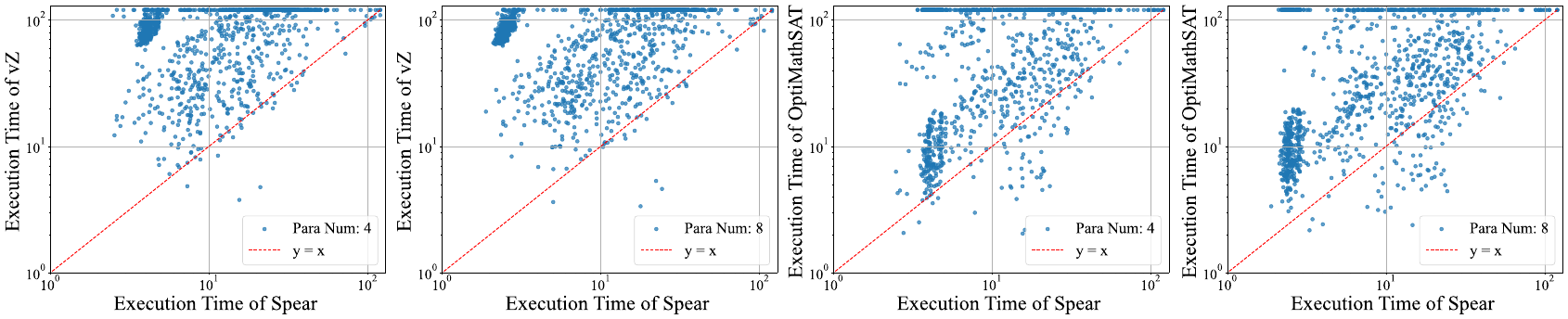}	
    \caption{Scatter plots comparing \ToolName\ against $\nu$Z and OptiMathSAT.}
    \label{fig:external}
\end{figure*}

We compare the best configuration of \ToolName\ (grouping 8 bits in a single task) against $\nu$Z and OptiMathSAT under varying configurations.
Figures~\ref{fig:solved-num}-\ref{fig:external} summarize the results.

\smallskip
\noindent \textbf{Solved Instances}.
\ToolName\ solves substantially more instances than both competitors across all core configurations. 
With 8 cores, \ToolName\ solves 1,211 instances—45\% more than 
$\nu$Z (838) and 52\% more than OptiMathSAT (797). 
\revise{Beyond the final counts, the relationship between the number of
solved instances and the time limit further highlights the advantage of
\ToolName\ (8 cores), as shown in Figure~\ref{fig:solved-num}. At a 30s limit, \ToolName\ already solves 1,069 instances, far more than OptiMathSAT (520) and $\nu$Z (185); at 60s the counts grow to 1196, 689, and 346, and at 120s to 1211, 797, and 838,
respectively. \ToolName\ is strongly front-loaded: it solves 88\% of its instances within 30s and 99\% within 60s, so further enlarging the time limit barely changes its solved count.
In contrast, the competitors keep solving more instances as the budget grows, but only at the price of substantially longer solving times, since the additional instances are precisely those that require close to the
full timeout. \ToolName\ thus dominates both solvers under every time
limit, and its margin is largest at the tightest budget—exactly the regime
most relevant to the query-intensive program-analysis setting that
motivates this work.}

\smallskip
\noindent \textbf{Runtime Performance}.
 \ToolName\ also achieves significantly lower solving times. With eight cores, its average time per instance (10.1s) is $7.5\times$ faster than 
$\nu$Z (75.8s) and $5.4\times$ faster than OptiMathSAT (54.8s). The total runtime (12,225.2s) is also $7.5\times$ smaller than 
$\nu$Z's despite solving 373 more instances. Furthermore, \ToolName\ exhibits strong scalability: increasing cores from 1 to 8 reduces average time by 53\%, compared to 22\% for $\nu$Z and 49\% for OptiMathSAT. Overall, the result highlights the effectiveness of \ToolName's fine-grained bit-level parallelism.

\revise{To make this scalability behavior explicit—rather than leaving it implicit in the per-file cumulative plots — Figure~\ref{fig:scalability} reports, for each solver, how the total solving time and the resulting speedup change as the number of cores grows from 1 to 8. Two observations stand out. First, in absolute terms (Figure~\ref{fig:scal-total}) \ToolName\ is far below both competitors at every core count: its total time drops from 26{,}081s on a single core to 14{,}124s (4 cores) and 12{,}225s (8 cores), roughly $5$--$7.5\times$ smaller than $\nu$Z and OptiMathSAT throughout. Second, in relative terms (Figure~\ref{fig:scal-speedup}) \ToolName\ also scales best, reaching a $2.13\times$ speedup at 8 cores, versus $1.97\times$ for OptiMathSAT and only $1.29\times$ for $\nu$Z. \ToolName's curve is steepest between 1 and 4 cores ($1.85\times$) and flattens afterward, indicating that four cores already exploit most of the available bit-level parallelism, with diminishing returns beyond that point. In contrast, $\nu$Z barely benefits from additional cores, as its weighted-MaxSAT reduction offers limited independent work to distribute. We emphasize that speedup is normalized per solver against its own single-core runtime, so it measures parallel efficiency rather than absolute performance; the two views are therefore complementary, and \ToolName\ leads on both.}

\ToolName\ solves nearly all instances within 20 seconds, while competitors require progressively longer times.
As illustrated in the scatter plots \revise{(where each dot represents a single OMT query, and its coordinates correspond to the solving time in seconds for the two compared setups)}, \ToolName\ outperforms both solvers in the vast majority of cases. \revise{Specifically, dots located above the diagonal line indicate instances where \ToolName\ (plotted on the x-axis) is faster than the competitor (plotted on the y-axis).} For example, against  $\nu$Z and OptiMathSAT with eight cores, \ToolName\ is faster on 97\% and 94\% of commonly solved instances\revise{\sout{ (points below the diagonal)}}, respectively. Only 6\% of instances favor OptiMathSAT, likely because its diversified, heuristic-driven MaxSAT approach outperforms bit-level search on specific constraints.

\smallskip
\noindent \textbf{Diminishing Returns}.
\ToolName\ achieves strong scalability for four cores, effectively solving most tractable problems within this parallelism regime. Only a small subset of intrinsically difficult cases remains, exhibiting limited gains from additional parallelization.
This performance gap may indicate bottlenecks localized to specific critical objectives or bits. A promising optimization direction could involve integrating a portfolio method to address these residual hard instances through complementary algorithmic strategies.



\smallskip \noindent \textbf{Practical Impact}. Notably, a 2–3$\times$ speedup often distinguishes an infeasible analysis pipeline from one suitable for production environments. For instance, analyzing a large codebase such as MySQL (2 MLoC) initially took over 10 hours. Reducing analysis time by several hours can make bug detection and variant testing viable within typical industrial time budgets of 5–10 hours per analysis~\cite{mcpeak2013scalable}.

\subsection{Ablation Study}
\label{subsec:eval:ablation}

\begin{table}[t]
\centering
\caption{Comparing variants of \ToolName\  with different partitioning bits}
\label{tbl:internal}
\begin{tabular}{l  c c}\toprule
\textbf{Solver}  & \textbf{Time (s)} & \textbf{Avg. Time (s)}  \\ \midrule
    \ToolName-1 bit (4 cores) &  22,561.6& 18.6 \\
   \ToolName-4 bits (4 cores)  & 15,397.0      & 12.7     \\
   \ToolName-8 bits (4 cores)    & 14,124.4      & 11.6   \\ \hline
   \ToolName-1 bit (8 cores)    & 18,084.9      & 14.9  \\ 
   \ToolName-4 bits (8 cores)   & 12,901.6      & 10.6  \\
    \ToolName-8 bits (8 cores)   & 12,225.2      & 10.1   \\ 
  \bottomrule
\end{tabular}
\end{table}

\begin{figure*}[t]
	\centering 
	\begin{subfigure}{0.49\textwidth}
			\centering
		\includegraphics[width=0.7\linewidth]{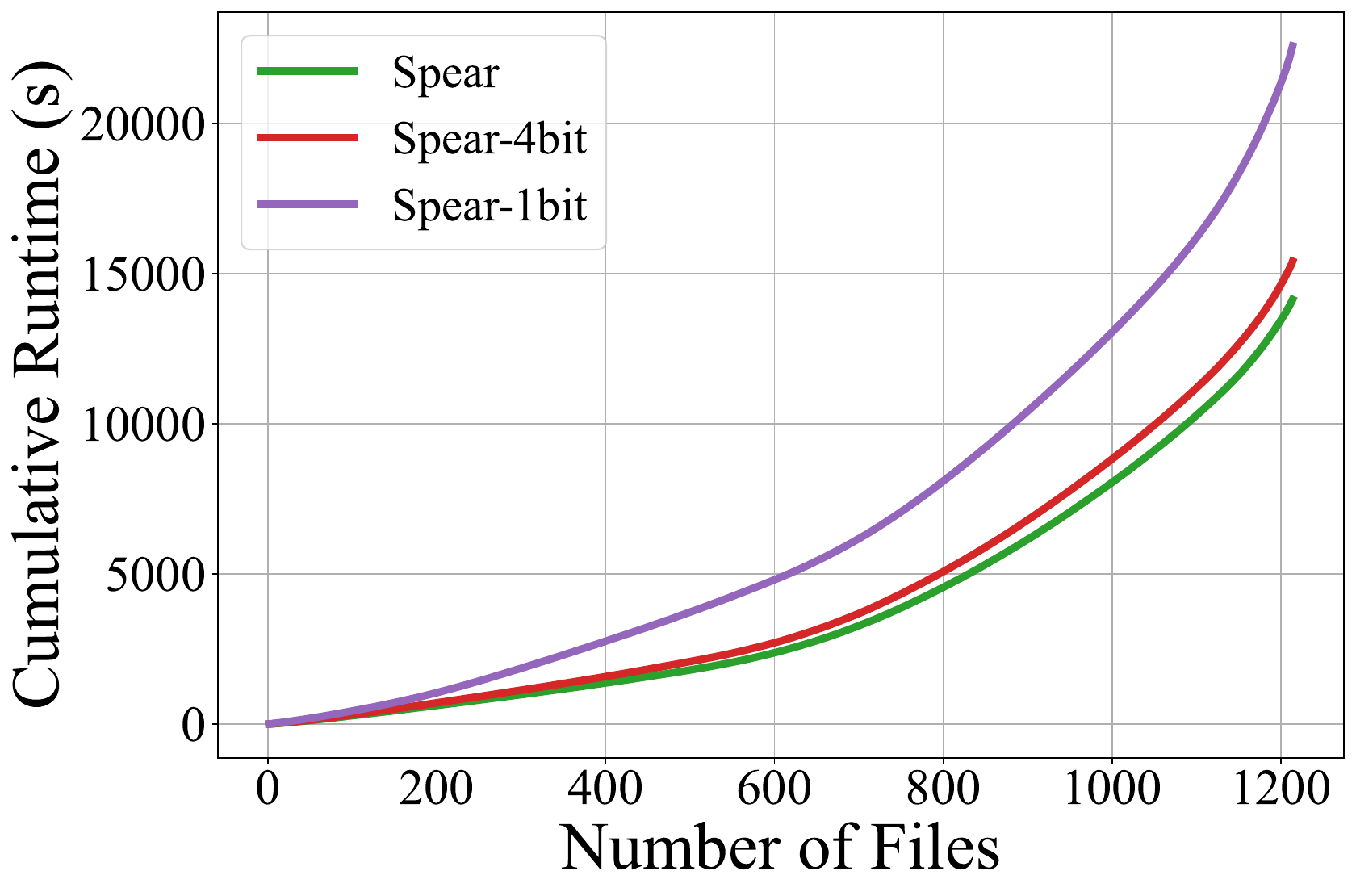}	
		\caption{4 cores}
		\label{fig:internal-1}
	\end{subfigure}
	\begin{subfigure}{0.49\textwidth}
			\centering
		\includegraphics[width=0.7\linewidth]{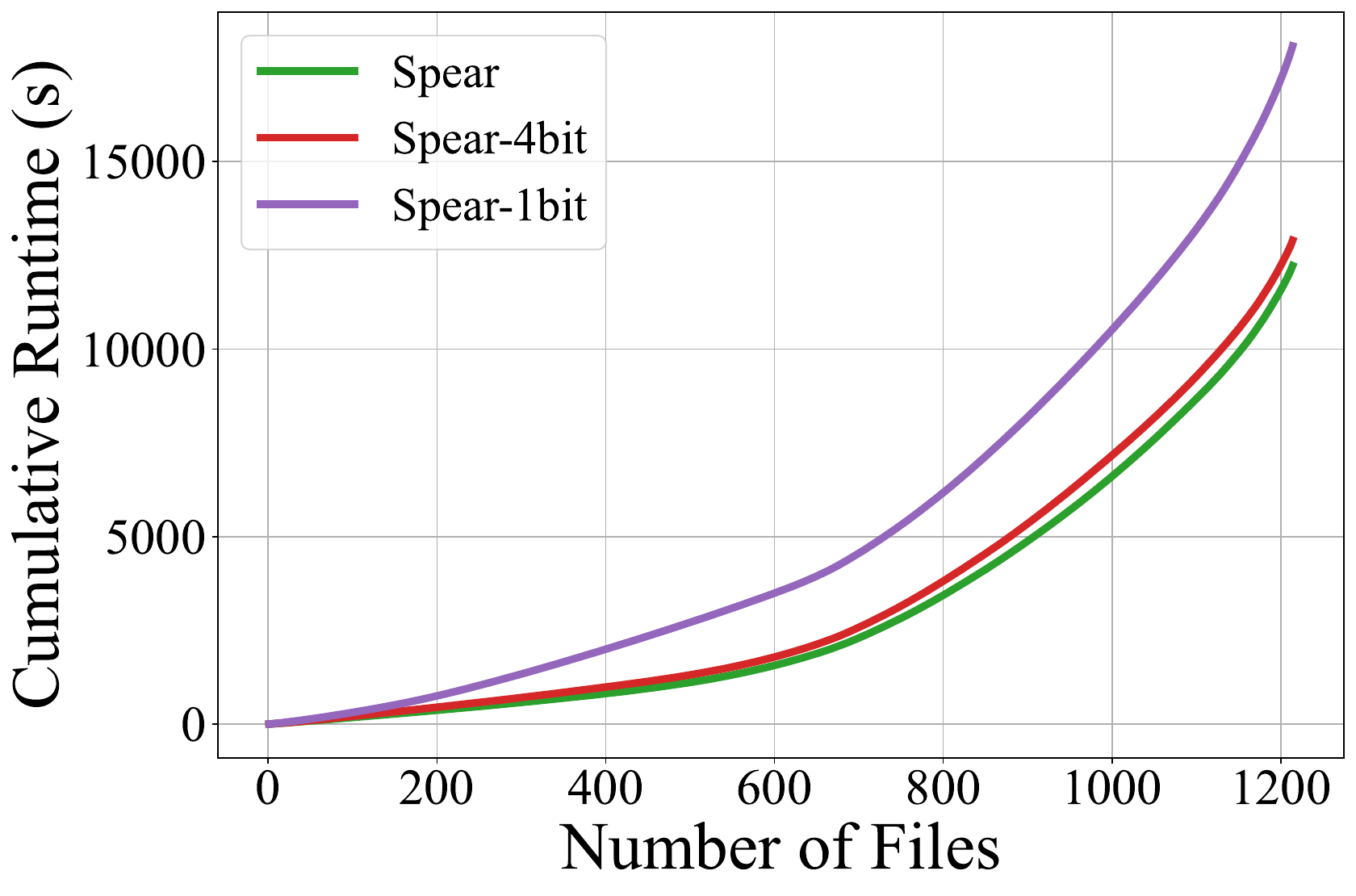}
		\caption{8 cores}
		\label{fig:internal-2}
	\end{subfigure} 
    \caption{Cactus plots comparing variants of \ToolName\ with different partitioning bits}
    \label{fig:cactus:inernal}
\end{figure*}

   \begin{figure*}[t]
	\centering 
    \includegraphics[width=\textwidth]{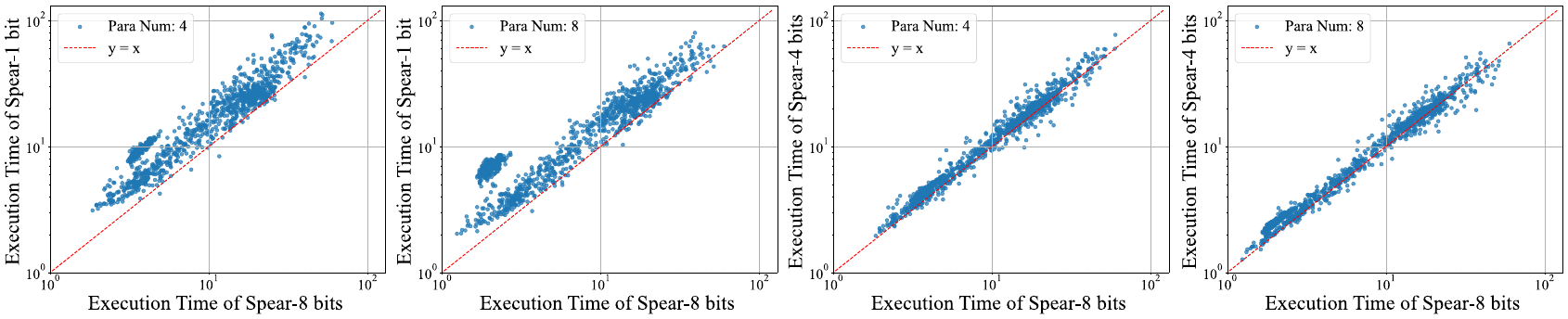}
    \caption{Comparing variants of \ToolName\ with different partitioning bits.}
    \label{fig:ssss}
\end{figure*}

%
%
%

We now evaluate variants of  Algorithm~\ref{algo:bit-level} by varying the number of bits processed per task (i.e., the size of partitioning bits). We consider six configurations: processing 1, 4, or 8 bits per parallel task across 4-core and 8-core environments. 
 \revise{
 Table~\ref{tbl:internal} reports the total time and average time per instance, while Figure~\ref{fig:cactus:inernal}-\ref{fig:ssss} summarizes the pairwise comparisons. Note that the total time represents the cumulative runtime across all $\sim$1,200 evaluated queries, rather than a single query.}
The number of solved instances for the configurations is nearly identical, so we omit the data from the table.

\smallskip
\noindent \textbf{Partitioning Granularity}.
Increasing the bits per task consistently improves performance. With 4 cores, \ToolName-8 bits achieves an 11.6s average time—10\% faster than \ToolName-4 bits (12.7s) and 60\% faster than \ToolName-1 bit (18.6s). This trend holds for 8 cores, where \ToolName-8 bits (10.1s) outperforms \ToolName-4 bits (10.6s) and \ToolName-1 bit (14.9s) by 5\% and 48\%, respectively. Larger bit partitions reduce synchronization overhead by consolidating related bit constraints into fewer tasks, reducing communication costs between cores. However, the marginal gain diminishes beyond 4 bits, suggesting diminishing returns at coarser granularities (i.e., larger $k$ for the partition bits)

\smallskip
\noindent \textbf{Core Scalability}.
Doubling the number of cores from 4 to 8 improves performance across all partitioning strategies. \ToolName-8 bits scales best, reducing the average time by 13\% (11.6s$\rightarrow$10.1s), compared to 17\% for \ToolName-4 bits (12.7s$\rightarrow$10.6s) and 20\% for \ToolName-1 bit (18.6s$\rightarrow$14.9s). Smaller bit partitions benefit more from additional cores due to increased task parallelism, but their higher baseline overhead limits absolute gains.

%% file: 6discuss.tex
\section{Discussions}
\label{sec:discuss}


\revise{This section focuses on the practical use of \ToolName. We discuss applicability and design choices, outline possible extensions, and summarize the main factors that limit or qualify the interpretation of our results.}

\subsection{Applicability and Design Choices}
\label{subsec:discuss:applicability}

\noindent \textbf{Applicability of OMT-based Solutions}.
Our algorithm generalizes naturally to more expressive abstract domains, such as octagons, by leveraging standard template-based invariant inference techniques. In this setting, unknown parameters are synthesized to satisfy semantic constraints. Besides, although our implementation targets bit-vector encodings, the underlying method is agnostic to the specific constraint theory. For instance, floating-point constraints can be handled via bit-vector encodings using bit-blasting, and finite-field arithmetic can be similarly reduced by encoding as bit-vectors. The speedup achieved by \ToolName\ can be sufficient to make previously infeasible analyses practical in industrial settings: when a static analysis exceeds typical time budgets (e.g., 5--10 hours per run~\cite{mcpeak2013scalable}), reducing solver time by several times can bring the pipeline within acceptable limits.

\smallskip
\noindent \textbf{Online vs.\ Offline Symbolic Abstraction}.
Symbolic abstraction can be used in two distinct ways, as discussed by Thakur and Reps~\cite{thakur2012method}. The first is to compute an \emph{explicit representation} of a transfer function $F$ (e.g., as a table or closed form), which is typically done \emph{offline} before the main analysis; the resulting $F$ is then applied repeatedly to abstract elements during analysis. This offline mode is often used to synthesize transfer functions for individual instructions or library functions, which are later composed during analysis. The second is to compute the abstraction of a \emph{specific} formula on demand---i.e., to evaluate $F(A)$ for a given abstract element $A$ without ever constructing a full representation of $F$. This online mode is closer in spirit to software model checking: solvers are used to \emph{apply} the (implicit) transfer functions of one or more blocks of code on demand, rather than precompute and store them. The two modes target different use cases. In this work, our primary setting (e.g., hybrid fuzzing with Angr) follows the second: we compute on-demand abstractions of path constraints for hard-to-cover branches, so each query corresponds to a single boxed OMT instance, and we do not materialize a global transfer function. \ToolName\ is designed to precisely accelerate this per-query, on-demand regime. That said, the \emph{offline} approach remains highly valuable when a reusable transformer is needed (e.g., for fixpoint iteration over a CFG). Synthesizing precise transformers offline for large-scale code can still produce challenging OMT instances; our algorithm can also accelerate that setting by parallelizing the optimization within each instance. Finally, \emph{outer} parallelism---e.g., solving OMT for different code snippets or program points in parallel---is orthogonal and complementary to our \emph{inner} bit-level parallelism within a single boxed OMT instance; the two can be combined in a full analysis pipeline.

\subsection{Extensions}
\label{subsec:discuss:extensions}

\noindent \textbf{Extending the Parallel Solvers}.
\revise{While \ToolName's divide-and-conquer strategy is effective, its performance can be further improved by incorporating portfolio-based techniques. Running multiple solver instances with varied SAT backends introduces algorithmic diversity, which can improve robustness and reduce tail latency. Another open challenge is refining the dynamic load-balancing mechanism to better handle uneven partition complexities. One promising direction is to make the number of partition bits adaptive, allowing the solver to respond to problem-specific structure at runtime. Incorporating lightweight static analysis (e.g., value-set or interval over-approximations of objectives) could also prune the search space for certain bits and reduce the number of SAT calls.}

\smallskip
\noindent \textbf{Sub-Optimal Transformers}.
\revise{While \ToolName\ prioritizes optimality, practical scenarios may tolerate sub-optimality for responsiveness. Integrating anytime MaxSAT solvers~\cite{nadel2024tt} could yield progressively refined solutions, allowing users to halt solving early for ``good enough'' abstractions. The precision of such approximations relative to conventional abstract transformers is difficult to characterize formally. Nevertheless, they can effectively produce a stream of candidate solutions that may be valuable in practice. We believe this direction is of independent interest from both practical and theoretical perspectives.}

\subsection{Limitations and Threats to Validity}
\label{subsec:discuss:limitations}
\label{subsec:discuss:threats}

\noindent \textbf{Limitations and When to Use \ToolName}.
\revise{\ToolName\ is most beneficial when the analysis generates many boxed OMT queries with multiple objectives and non-trivial constraint formulas; in such cases, bit-level parallelism yields substantial gains over sequential and objective-level strategies. For analyses that produce only a few objectives per query or very small formulas, the overhead of task scheduling and shared-memory synchronization may outweigh parallel gains---in those settings, a sequential backend or a small number of cores may be preferable. Additionally, all workers share the same constraint formula $\varphi$, so memory usage is dominated by a single bit-blasted instance; scaling to extremely large formulas may require out-of-core or distributed strategies that we do not address here.}

\smallskip
\noindent \textbf{Threats to Validity}.
\revise{Several factors may affect the generalizability of our results. \emph{Internal validity}: We compared \ToolName\ against parallelized configurations of $\nu$Z and OptiMathSAT (portfolio mode), since neither natively supports boxed OMT in a divide-and-conquer style; the baselines thus reflect our best effort to use them in a parallel setting. \emph{External validity}: Our benchmarks are derived from two binary analysis clients (Angr and Clearblue) and a fixed set of programs; OMT instances from other domains (e.g., software product lines or configuration synthesis) may exhibit different characteristics. Expanding the benchmark set to additional application domains would strengthen external validity.
\emph{Construct validity}: We used a 120-second timeout per query, which is reasonable for analysis pipelines that must process many queries; longer timeouts might change the relative ranking on a subset of hard instances.}

%% file: 7related.tex
\section{Related Work}
\label{sec:related}


\subsection{Program Analysis}

\noindent \textbf{Best Abstract Transformers}.
Symbolic abstraction~\cite{reps2004symbolic,DBLP:journals/entcs/ThakurLLR15} computes the best abstraction of a given formula $\varphi$ in a specific abstract domain.
The problem has been studied in various contexts, including heap-manipulation programs~\cite{reps2004symbolic,DBLP:conf/tacas/YorshRS04}, binaries~\cite{tsl,yao2021program}, numerical programs~\cite{jiang2017block}, and Markov decision processes~\cite{wachter2010best}.
Constructing the best abstract transformer has also been considered in the context of 
Indeed, symbolic abstraction is closely related to predicate abstraction~\cite{Graf97} in the model-checking community, in which the abstract element is a Boolean combination of a given set of predicates.
There are two categories of approaches to solving the symbolic abstraction problem.
One category of existing approaches uses SMT-based iterative refinement:
Some techniques work from ``below'', identifying a chain of successively weaker implicants until one is a consequence of $\varphi$~\cite{reps2004symbolic}; Other techniques work from ``above,'' identifying a chain of successively stronger implicates until no further strengthening is possible~\cite{thakur2012method,thakur2012bilateral}.
Another category reduces the problem to other automated reasoning tasks, such as OMT solving~\cite{yao2021program,li2014symbolic} and quantifier elimination~\cite{brauer2011transfer}.
However, neither category addresses the parallel synthesis of best abstract transformers.

	
\smallskip
\noindent \textbf{Analysis of Modular Arithmetic}.
Addressing the discrepancy between mathematical and finite-precision integers is crucial to the design of numeric domains. Consequently, there is extensive research on abstract domains for bit-vector arithmetic~\cite{gange15, mine:hal-00748094, 10.1007/978-3-319-52234-0_27, 10.1007/978-3-540-74061-2_8}. 
Astree~\cite{blanchet2002design} and cccheck~\cite{fahndrich2010static} 
	detect expressions guaranteed not to cause overflow or underflow while issuing warnings for potentially unsafe expressions. 
	The wrapped interval domain~\cite{gange15} models overflow and underflow by wrapping bit-vector values around their minimum and maximum limits. 
	These approaches rely on traditional, instruction-level abstract interpretation, which may not always yield optimal abstract transformers. In contrast, symbolic abstraction for modular arithmetic has also been applied to various domains, such as intervals~\cite{regehr2006deriving,brauer2011transfer}, sets~\cite{brauer2010automatic}, affine relations~\cite{elder2014abstract,tsl}, octagons~\cite{sharma2017sound},
	and polyhedra~\cite{sharma2017sound,yao2021program}. 
	Transfer functions for low-level code have been synthesized for intervals using BDDs by applying interval subdivision, where the extrema representing the interval are bit-vectors.
	We focus on template domains that use a finite set of linear constraints. 
	
	

\smallskip
\noindent \textbf{Parallel Program Analysis}. Parallel methods have also been employed in static analysis to enhance efficiency, such as pointer analysis~\cite{10.1145/1869459.1869495}, CFL reachability~\cite{6957254}, rapid type analysis (RTA)~\cite{10.1145/3106237.3106261}, and symbolic execution~\cite{palikareva2013multi}. There have also been a few efforts to parallelize abstract interpretation. \citet{10.1145/996841.996869} develop a tool called C Global Surveyor (CGS), which distributes the analysis across multiple processors in a cluster. 
Building on this, \citet{7054185} propose a novel approach to program analysis based on abstract interpretation, which splits the analysis into two components: a parallel exploration of the transition system and a separate state-space manager. 
Our work complements these efforts by enabling parallel synthesis of the best abstract transformers.

\subsection{Automated Reasoning}

\smallskip
\noindent \textbf{Optimization Modulo Theories}.
OMT~\cite{tsiskaridze2024generalized} has been studied extensively, with proposed solutions spanning various domains, including linear arithmetic~\cite{LIA-OPT-Masters}, pseudo-Boolean constraints~\cite{PB-OMT12,Sebastiani:2015:OMT:2737801.2699915,DBLP:journals/corr/SebastianiT17}, bit-vectors~\cite{bjorner2015nuz,nadel2016bit,DBLP:journals/corr/abs-1905-02838}, and nonlinear arithmetic~\cite{DBLP:conf/frocos/BigarellaCGIJRS21}. \revise{In program-analysis settings, OMT has also been used for symbolic optimization and transformer synthesis through SMT-based optimization, quantifier-instantiation-based reasoning, and iterative refinement~\cite{li2014symbolic,brauer2011transfer,kong2018delta-decision,yao2021program}. Boxed OMT generalizes this view to multiple separate objective functions over a shared formula, where modern solvers can share Boolean search, theory minimization, and improvement clauses while optimizing several objectives~\cite{sebastiani2015pushing,DBLP:journals/corr/SebastianiT17}.}
Existing algorithms can be broadly classified into two categories.
The \emph{offline schema} treats the SMT solver as a black box, performing the optimization search through incremental solver invocations \cite{PB-OMT12,Sebastiani:2015:OMT:2737801.2699915}. This schema employs linear or binary search strategies, iteratively tightening the bounds on the objective function after each solver call. In contrast, the \emph{inline schema} incorporates the optimization criteria directly into the internal workings of the SMT solver \cite{PB-OMT12, Sebastiani:2015:OMT:2737801.2699915}. \revise{The calculus-based view of generalized OMT is orthogonal to this classification: it provides a theory-agnostic framework whose rule-application strategies can model offline, inline, and hybrid OMT procedures~\cite{tsiskaridze2024generalized}.}
The inline schema necessitates significant modifications to the solver.
We follow the offline schema and rely on a SAT oracle.  
Our approach can be extended to variables with finite value sets, such as floating points.

\smallskip
\noindent \textbf{Parallel Constraint Solving}.
The two primary categories of parallel solving approaches are portfolio and divide-and-conquer. In the portfolio approach, multiple solvers are executed in parallel on the same input formula, and the solution is obtained from the first solver that determines satisfiability~\cite{xu2008satzilla, DBLP:conf/cav/WintersteigerHM09}. Some portfolio frameworks extend this strategy by enabling solvers to share information, such as learned lemmas~\cite{clauseSharingCloud}.
A significant portion of the research on SMT solving has concentrated on portfolio-based techniques. Z3 was among the first solvers to implement portfolio solving with information sharing~\cite{concurrentZ3}. Additionally, SMTS~\cite{SMTS} provides a parallel framework for portfolio solving that supports information sharing~\cite {clauseSharingCloud}.
In contrast, the divide-and-conquer approach partitions the search space of the input formula into distinct subspaces, which are then solved in parallel~\cite{DBLP:conf/sat/HyvarinenMS15,paropensmt,DBLP:conf/atva/MarescottiHS16}. 
\revise{Modern advances in this space explore highly sophisticated partitioning topographies for distributed SMT solving, which dynamically divide the search space by selecting optimal splitting variables based on internal solver heuristics~\cite{WilsonNRCTB23}.}
Another notable example of this approach is cube-and-conquer, which has been successfully applied to SAT~\cite{cubeAndConquer} and SMT~\cite{AnttiLookahead,pboolector}.
OpenSMT2 implements two lookahead strategies for generating cubes in a cube-and-conquer-like partitioning scheme~\cite{AnttiLookahead}. One approach is based on the global number of free atoms, while the other relies on the number of unassigned atoms in the clauses. PBoolector~\cite{pboolector} employs a cube-and-conquer strategy for solving QF\_BV formulas.
Our work is the first to address parallel solving for boxed OMT problems. We adopt a divide-and-conquer strategy, partitioning the search space along the bit-level structure of the objective functions.


%% file: 8conclusion.tex
\section{Conclusion}
\label{sec:conclu}
The construction of precise abstract transformers is a critical bottleneck in scaling abstract interpretation to industrial-scale programs.
This paper presented \ToolName, a fine-grained parallel algorithm for solving the problem. Experimental results demonstrate that \ToolName\ consistently outperforms existing solvers on benchmarks derived from two real-world binary analysis applications.

%% file: 9dataavail.tex
\section*{Data Availability}
We release the source code and datasets used in this work at \href{https://anonymous.4open.science/r/para_omt-4D86}{this repository}.